\newtheorem{theorem}{Theorem}
\newtheorem{lemma}{Lemma}
\newtheorem{corollary}{Corollary}
\newtheorem{conjecture}{Conjecture}
\newcommand{\eps}{\epsilon}
\title{\bf Tight Approximation Bounds for Vertex Cover on Dense $k$-Partite Hypergraphs\\[1ex]}
\author{
Marek Karpinski\thanks{Dept. of Computer Science and the Hausdorff
    Center for Mathematics, University of Bonn.
    Supported in part by DFG grants and the Hausdorff Center grant EXC59-1.
    Email:~\texttt{marek@cs.uni-bonn.de} } \and
Richard Schmied\thanks{Dept. of Computer Science, University of Bonn.
    Work supported by Hausdorff Doctoral Fellowship.
    Email:~\texttt{schmied@cs.uni-bonn.de}} \and
Claus Viehmann\thanks{Dept. of Computer Science, University of Bonn.
    Work partially supported by Hausdorff Center for Mathematics, Bonn.
    Email:~\texttt{viehmann@cs.uni-bonn.de}}
}
\date{}
\begin{document}
\maketitle

\begin{abstract}
We establish almost tight upper and lower approximation bounds for
the Vertex Cover problem on dense $k$-partite hypergraphs.
\end{abstract}

\section{Introduction}
A hypergraph $H=(V,E)$ consists of a vertex set $V$ and a collection of hyperedges $E$
where a hyperedge is a subset of $V$. $H$ is called $k$-uniform if every edge in $E$
contains exactly $k$ vertices. A subset $C$ of $V$ is a vertex cover of $H$ if every
edge $e\in E$ contains at least a vertex of $C$.

The \emph{Vertex Cover} problem in a $k$-uniform hypergraph $H$ is the problem of computing
a minimum cardinality vertex cover in $H$.
It is well known that the problem is $NP$-hard even for $k=2$ (cf.~\cite{K75}).
On the other hand, the simple greedy heuristic which chooses a maximal set of nonintersecting edges, and then
outputs all vertices in those edges, gives a $k$-approximation algorithm for 
the Vertex Cover problem restricted to $k$-uniform hypergraphs.  
The best known approximation algorithm achieves a slightly better approximation ratio of $(1-o(1))k$ and is due to Halperin~\cite{H02}.

On the intractability side, Trevisan~\cite{T01} provided one of the first inapproximability results
 for the $k$-uniform vertex
cover problem and obtained a inapproximability factor of $k^{\frac 1{19}}$ assuming $P \neq NP$.
In 2002, Holmerin~\cite{H02} improved the factor to $k^{1-\eps}$. 
Dinur et al.~\cite{DGK02,DGKR05} gave consecutively two lower bounds, first $(k-3-\eps)$ and
later on $(k-1-\eps)$. Moreover, assuming Khot's Unique Games Conjecture (UGC)~\cite{K02}, Khot and Regev~\cite{KR08} proved an inapproximability factor of $k-\eps$ for the  Vertex Cover problem on $k$-uniform hypergraphs. Therefore, it implies that the currently achieved ratios are the best possible. 

The Vertex Cover problem restricted to $k$-partite $k$-uniform hypergraphs, when the underlying
partition is given, was studied by  Lov\'asz~\cite{L75} who achieved a $\frac k2$-approximation.
This approximation upper bound is obtained by rounding the natural LP relaxation of the problem.
The above bound on the integrality gap was shown to be tight in~\cite{AHK96}.
As for the lower bounds, Guruswami and Saket~\cite{GS10b} proved that it is NP-hard to approximate the Vertex Cover problem on
$k$-partite $k$-uniform hypergraphs to within a factor of $\frac k4-\eps$ for $k\geq 5$. 
Assuming the Unique Games Conjecture, they also provided an inapproximability factor of 
$\frac k2-\eps$ for $k\geq3$. More recently, Sachdeva and Saket~\cite{SS11} claimed a nearly optimal $NP$-hardness factor. 

To gain better insights on lower bounds, dense instances of many optimization problems has been intensively studied~\cite{AKK95,KRS09,KZ97,K01}.
The Vertex Cover problem has been investigated in the case of dense graphs, where the number of edges is within a constant factor of $n^2$, by Karpinski and Zelikovsky~\cite{KZ97}, Eremeev~\cite{E99}, Clementi and Trevisan~\cite{CT99}, later by Bar-Yehuda and Kehat~\cite{BK04} as well as Imamura and Iwama~\cite{II05}.

The Vertex Cover problem restricted to dense balanced $k$-partite $k$-uniform hypergraphs 
was introduced and studied in~\cite{CKSV10b}, where it was proved that this restricted version
of the problem 
admits an approximation ratio better than $\frac k2$ if the given hypergraph is dense enough.

In this paper, we give a new approximation algorithm for the  
Vertex Cover problem restricted to dense $k$-partite $k$-uniform hypergraphs
and prove
that the achieved approximation ratio is almost tight assuming the 
Unique Games Conjecture.

\section{Definitions and Notations}
Given a natural number $i\in \mathds{N}$, we introduce for notational simplicity the set $[i]=\{1,..,i\}$
and set $[0]=\emptyset$.
Let $S$ be a finite set with cardinality $s$ and $k\in [s]$. We will use the abbreviation ${S\choose k}=\{S'\subseteq S\mid |S'|=k \}$.\\
A \emph{$k$-uniform hypergraph} $H=(V(H),E(H))$ consists of a set of vertices $V$ and a collection $E\subseteq{V\choose k}$ of edges.
For a $k$-uniform hypergraph $H$ and a vertex $v\in V(H)$, we define the \emph{neighborhood} $N_H(v)$ of $v$ by
$\left(\bigcup_{e\in \{e\in E\mid v\in e\}} e\right)\setminus \{v\}$ and the
\emph{degree} $d_H(v)$ of $v$ to be $|\{e\in E\mid v\in e\}|$.
We extend this notion to subsets of $V(H)$, where $S\subseteq V(H)$ obtains the degree $d_H(S)$ by $|\{e\in E\mid S\subseteq e\}|$.
\\ 
A \emph{$k$-partite} $k$-uniform hypergraph $H=(V_1,..,V_k,E(H))$ is a $k$-uniform hypergraph 
such that $V$ is a disjoint union of 
$V_1,..,V_k$ with $|V_i\cap e|=1$ for every $e\in E$ and $i\in [k]$. In the remainder, we assume
that $|V_i|\geq |V_{i+1}|$ for all $i\in [k-1]$ and $k=O(1)$.\\
A \emph{balanced} $k$-partite $k$-uniform hypergraph $H=(V_1,..,V_k,E(H))$ is a 
$k$-partite $k$-uniform hypergraph with $|V_i|=\frac{|V|}{k}$ for all $i\in [k]$.
 We set $n=|V|$ and $m=|E|$ as usual.\\ 
For a $k$-partite $k$-uniform hypergraph 
$H=(V_1,..,V_k,E(H))$ and $v\in V_k$, we introduce the \emph{$v$-induced hypergraph} $H(v)$,
where the edge set of $H(v)$ is defined by $\{e\setminus \{v\} \mid v\in e\in E(H)\}$
and the vertex set of $H(v)$ is  partitioned into $V_i\cap N_H(v)$ with $i\in [k-1]$.

A \emph{vertex cover} of a $k$-uniform hypergraph $H=(V(H),E(H))$ is a subset $C$ of $V(H)$ with the property 
that $e\cap C\not= \emptyset$ holds for all $e\in E(H)$.
The Vertex Cover problem consists of finding a vertex cover of minimum size in a given 
$k$-uniform hypergraph. The Vertex Cover problem in $k$-partite $k$-uniform hypergraphs
is the restricted problem, where a $k$-partite $k$-uniform hypergraph and its
vertex partition is given as a part of the input.  
 
We define a $k$-partite $k$-uniform hypergraph $H=(V_1,..,V_k,E(H))$ as {\em $\eps$-dense} 
for an $\eps\in [0,1]$ if the following condition holds:
$$ |E(H)| \quad\geq\quad \epsilon \prod_{i\in [k]} |V_i|  $$

For $\ell\in [k-1]$, we introduce the notion of $\ell$-wise $\eps$-dense $k$-partite $k$-uniform hypergraphs. 
Given a $k$-partite $k$-uniform hypergraph $H$, if there exists an $I\in {[k] \choose \ell}$ and 
an $\eps\in [0,1]$ such that for all $S$ with the property $|V_i\cap S|=1$  for all $i\in I$ the condition
$$d_H(S)\quad\geq\quad \eps \prod_{i\in [k]\setminus I} |V_i|  $$
holds, we define $H$ to be $\ell$-wise $\eps$-dense. 

\section{Our Results}
In this paper, we give an improved approximation upper bound for the Vertex Cover problem restricted to $\epsilon$-dense
$k$-partite $k$-uniform hypergraphs. The approximation algorithm
in~\cite{CKSV10b} yields an approximation ratio of 
$$\frac{k}{k-(k-2)(1-\epsilon)^{\frac1{k-\ell}}}$$  for $\ell$-wise $\eps$-dense
balanced $k$-partite $k$-uniform hypergraphs. Here, we design an algorithm 
with an approximation factor of $$\frac{k}{2+(k-2) \epsilon}$$
for the $\eps$-dense case which also improves on the  
$\ell$-wise $\eps$-dense balanced case for all $\ell\in [k-2]$
and matches their bound when $\ell= k-1$. A further
advantage of this algorithm is that it applies to 
a larger class of hypergraphs since the considered hypergraph
is  not necessarily required to be balanced.\\
As a byproduct, we obtain a constructive proof that a
vertex cover of an  
 $\epsilon$-dense
$k$-partite $k$-uniform hypergraph $H=(V_1,..,V_k,E(H))$
is bounded from below by $\epsilon |V_k|$, which is shown to be 
sharp by constructing a family of tight examples.

On the other hand, we provide 
inapproximability results for the Vertex Cover problem 
restricted to $\ell$-wise $\epsilon$-dense balanced
$k$-partite $k$-uniform hypergraphs under the Unique Games Conjecture.  
We also prove 
that this reduction yields a matching lower bound if we use a
conjecture on the Unique Games hardness of the Vertex Cover
problem restricted to balanced $k$-partite $k$-uniform hypergraphs.
This means that further restrictions such as $\ell$-wise density
cannot lead to improved approximation ratios and our
proposed approximation algorithm is best possible assuming this conjecture.
In addition, we are able to prove an inapproximability factor
under $P\neq NP$.

\section{Approximation Algorithm }\label{approxopt}
In this section, we give a polynomial time approximation algorithm
with improved approximation factor for the 
Vertex Cover problem restricted to $\epsilon$-dense $k$-partite $k$-uniform hypergraphs.\\ 

We state now our main result.

\begin{theorem}\label{thm:main1}
There exists a polynomial time approximation
algorithm with  approximation ratio 
$$\frac{ k }{2+(k-2) \epsilon} $$ 
for the Vertex Cover problem in  $\eps$-dense
$k$-partite $k$-uniform hypergraphs.
\end{theorem}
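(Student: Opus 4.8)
The plan is to combine the classical LP-based machinery for $k$-partite vertex cover with a density-driven lower bound, exploiting the identity $\frac{2+(k-2)\epsilon}{k}=\frac{2}{k}(1-\epsilon)+\epsilon$, which exhibits the target quantity as a convex combination of the Lov\'asz factor $\frac2k$ (the reciprocal of $\frac k2$) at $\epsilon=0$ and the exact factor $1$ at $\epsilon=1$. This decomposition suggests interpolating between Lov\'asz's $\frac k2$-rounding of the natural LP and the trivial optimal cover $V_k$ of the complete hypergraph, with the interpolation governed by $\epsilon$.

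First I would establish the density lower bound $\mathrm{OPT}\geq\epsilon|V_k|$, which simultaneously yields the announced byproduct. For any vertex cover $C$, every edge meets $C$, so $m\leq\sum_{v\in C}d_H(v)$. Since a vertex $v\in V_i$ satisfies $d_H(v)\leq\prod_{j\in[k]\setminus\{i\}}|V_j|$ and $|V_i|\geq|V_k|$ for all $i$, each such degree is at most $\prod_{j\in[k]}|V_j|/|V_k|$. Hence $\epsilon\prod_{j\in[k]}|V_j|\leq m\leq|C|\cdot\prod_{j\in[k]}|V_j|/|V_k|$, giving $|C|\geq\epsilon|V_k|$; the accompanying family of tight examples then certifies that this is sharp.

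For the algorithm itself I would solve the natural LP relaxation to obtain an optimal fractional cover $x^*$ with $\sum_v x^*_v\leq\mathrm{OPT}$, and then build a hybrid integral cover depending on a threshold $\theta=\theta(\epsilon)$. Vertices of $V_k$ carrying large fractional weight ($x^*_v\geq\theta$) are placed directly into the cover, at a cost controlled both by a Markov bound $|\{v\in V_k:x^*_v\geq\theta\}|\leq x^*(V_k)/\theta$ against the LP and by the bound $|V_k|\leq\mathrm{OPT}/\epsilon$; every remaining edge then carries fractional weight at least $1-\theta$ on its first $k-1$ coordinates, so rescaling $x^*$ by $1/(1-\theta)$ on $V_1\cup\cdots\cup V_{k-1}$ yields a feasible fractional cover of the residual $(k-1)$-partite shadow, to which Lov\'asz's $\frac{k-1}{2}$-rounding applies in polynomial time. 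Summing the two parts and tuning $\theta$ as a function of $\epsilon$ should reproduce the convex combination above.

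The main obstacle is arranging this combination so that the two lower bounds $\mathrm{OPT}\geq\sum_v x^*_v$ and $\mathrm{OPT}\geq\epsilon|V_k|$ enter with exactly the right weights to produce the tight constant $\frac{k}{2+(k-2)\epsilon}$ rather than some weaker interpolation: the factor $\frac k2$ must be genuinely retained on the non-dense part while the density is fully exploited on the part absorbed into $V_k$. A secondary point is to confirm that the hybrid cover is feasible for all edges and computable in polynomial time, which follows since both the LP and the $k$-partite rounding are polynomial for $k=O(1)$.
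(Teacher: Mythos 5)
Your density lower bound $\mathrm{OPT}\geq\epsilon|V_k|$ is correct, and in fact simpler than the paper's: the paper obtains it (Corollary~\ref{corollary:extract}) only as a byproduct of a constructive recursion, whereas your double counting $m\leq\sum_{v\in C}d_H(v)\leq |C|\prod_{j\in[k]}|V_j|/|V_k|$ is immediate. But this is only the byproduct; the heart of the theorem is the constant $\frac{k}{2+(k-2)\epsilon}$, and there your LP-threshold hybrid has a genuine gap --- the one you yourself flag as the ``main obstacle'' --- and it is not a tuning issue: the interpolation you hope for does not exist within your accounting. Write $L_k$ and $L_-$ for the LP mass on $V_k$ and on $V_1\cup\dots\cup V_{k-1}$. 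Your cover costs at most $\min\{L_k/\theta,\,|V_k|\}+\frac{k-1}{2(1-\theta)}L_-$, against $\mathrm{OPT}\geq\max\{L_k+L_-,\ \epsilon|V_k|\}$. The thresholded heavy vertices are not certified to lie in any optimal cover, so they are paid at price $1/\theta>1$ per unit of LP mass (or charged to $|V_k|\leq\mathrm{OPT}/\epsilon$). Setting $x=L_k/\mathrm{OPT}$ and optimizing the threshold gives $\min_\theta\bigl[\tfrac{x}{\theta}+\tfrac{(k-1)(1-x)}{2(1-\theta)}\bigr]=\bigl(\sqrt{x}+\sqrt{\tfrac{k-1}{2}(1-x)}\,\bigr)^2\leq\frac{k+1}{2}$ by Cauchy--Schwarz, with the worst split $x=\frac{2}{k+1}$; together with the fallback candidate $V_k$ (ratio $\leq 1/\epsilon$) the best ratio these ingredients certify is $\min\{1/\epsilon,\frac{k+1}{2}\}$, and $\min\{1/\epsilon,\frac k2\}$ even if you add plain Lov\'asz as a third candidate. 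For every $\epsilon\in(0,1)$ and $k\geq 3$ this is strictly worse than the target: e.g.\ $k=10$, $\epsilon=\frac12$ gives $\frac{k}{2+(k-2)\epsilon}=\frac53$ versus your $2$. Indeed the target is strictly below \emph{both} pure bounds $\frac k2$ and $\frac1\epsilon$, so no analysis that treats the split $(L_k,L_-,|V_k|)$ adversarially and charges heavy vertices to Markov or to $|V_k|$ can reach it; your opening identity $\frac{2+(k-2)\epsilon}{k}=\frac2k(1-\epsilon)+\epsilon$ is a convex combination of \emph{reciprocals}, which is exactly what threshold interpolation fails to realize.

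The missing idea is the paper's Lemma~\ref{lem:extract} (procedure $Extract$): taking the $p=\lceil |E|/\prod_{l\in[k-1]}|V_l|\rceil$ heaviest vertices $v_1,\dots,v_p$ of $V_k$ and recursing on the induced hypergraphs $H(v_u)$ over all guesses ``$v_1,\dots,v_{u-1}\in\mathrm{OPT}$, $v_u\notin\mathrm{OPT}$'' produces, in polynomial time, a family of $O(n^k)$ sets guaranteed to contain an $S$ that is a \emph{subset of an optimal cover} with $|S|\geq\epsilon|V_k|$; a degree-averaging argument shows the induced instances stay dense enough for the recursion to maintain the size guarantee. Such an $S$ is paid at ratio exactly $1$, since $\mathrm{OPT}\geq|S|+\mathrm{OPT}'$ where $\mathrm{OPT}'$ is the optimum of the residual instance; running Lov\'asz on the residual and keeping $V_k$ as a candidate (so that $|S|+\frac k2\mathrm{OPT}'\leq|V_k|$ may be assumed) yields $\frac{|S|+\frac k2\mathrm{OPT}'}{|S|+\mathrm{OPT}'}\leq\frac{k}{2+(k-2)|S|/|V_k|}\leq\frac{k}{2+(k-2)\epsilon}$. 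It is precisely the certification that $S$ sits inside an optimal cover --- something LP thresholding cannot provide --- that generates the denominator $2+(k-2)\epsilon$; your elegant but non-constructive counting bound cannot substitute for it.
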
 

A crucial ingredient of the proof of Theorem~\ref{thm:main1} 
is  Lemma~\ref{lem:extract}, in which we show that we can extract
efficiently a large part of an optimal vertex cover of a given 
$\epsilon$-dense $k$-partite $k$-uniform hypergraph $H=(V_1,..,V_k,E(H))$.
More precisely, we obtain in this way a constructive proof
that the size of a vertex cover of $H$ is bounded from below
by $\epsilon |V_k|$. The procedure for the extraction of a part
of an optimal vertex cover is given in Figure~\ref{fig:extract}. 

\begin{figure}[ht]
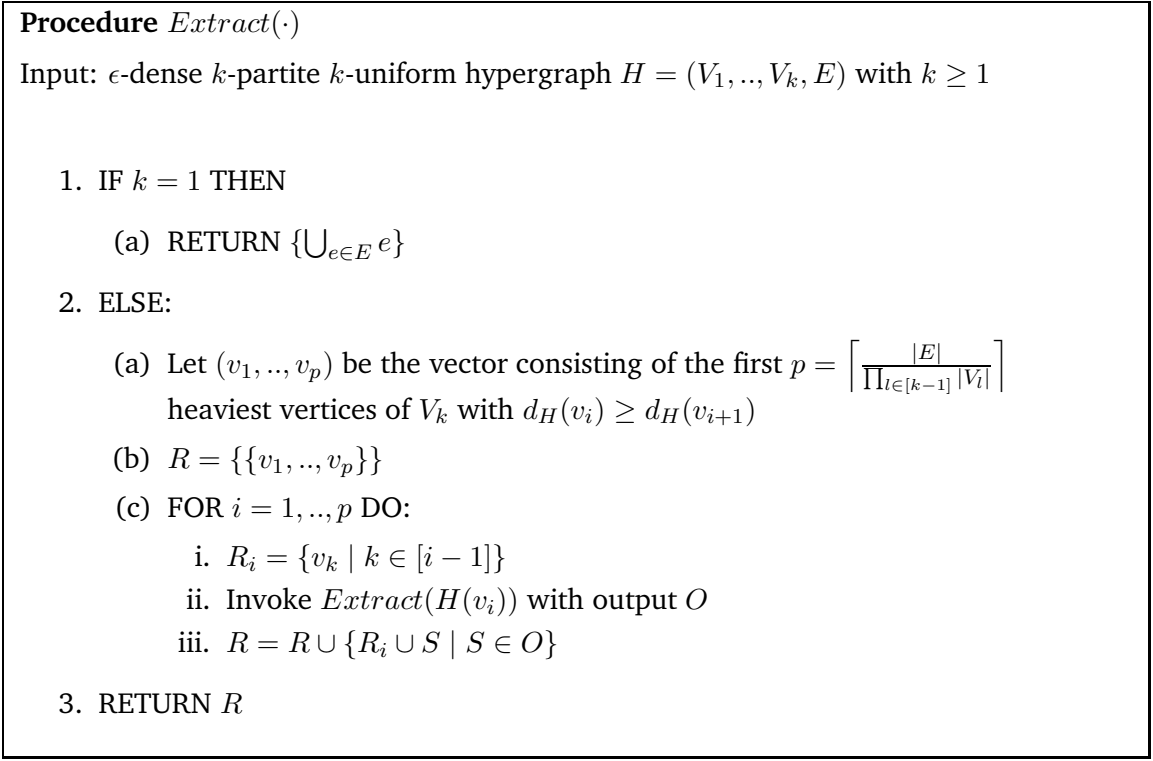

\noindent\fbox{
\begin{minipage}{\textwidth - 4mm}
\small
\noindent\textbf{Procedure $Extract(\cdot)$} \\[1ex]
Input: $\epsilon$-dense $k$-partite $k$-uniform hypergraph $H = (V_1,..,V_k,E)$
with $k\geq 1$\\[.8ex]
\begin{enumerate}
\item IF $k=1$ THEN 
\begin{enumerate}
\item RETURN $\{\bigcup_{e\in E}e\}$ %where $V_1$ is a minimum vertex cover of $H=(V_1,\{ \{v\}\mid v\in V_1\})$
\end{enumerate}
\item ELSE:
\begin{enumerate}
\item Let $(v_1,..,v_p)$ be the vector consisting  of the 
first $p=\left\lceil \frac{|E|}{\prod_{l\in [k-1]}|V_l|}\right\rceil$ \\ heaviest vertices of $V_k$ with $d_{H}(v_i)\geq d_{H}(v_{i+1})$ 
\item $R=\{\{v_1,..,v_p\} \}$
\item FOR $i=1,..,p$ DO: 
\begin{enumerate}
\item  $R_i=\{v_k \mid k\in [i-1] \}$
\item Invoke $Extract(H(v_i))$ with output $O$
\item $R=R\cup \{R_i\cup S \mid S\in O\}$
\end{enumerate}
\end{enumerate}
\item RETURN $R$\\
\end{enumerate}

\end{minipage}
}
\caption{Procedure $Extract$}
\label{fig:extract}
\end{figure}
We  now formulate Lemma~\ref{lem:extract}:

\begin{lemma}\label{lem:extract}
Let $H=(V_1,..,V_k,E(H))$ be an $\epsilon$-dense $k$-partite $k$-uniform hypergraph 
with $k\geq 1$. Then, the procedure $Extract(\cdot )$
computes in polynomial time a collection $R$ of subsets of $V(H)$ such that
the size of $R$ is polynomial in $|V(H)|$ and $R$ contains a set $S$, which is a subset
of an optimal vertex cover of $H$ and its cardinality is
at least $\epsilon |V_k|$. 
\end{lemma}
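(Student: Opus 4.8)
The plan is to prove, by induction on $k$, a slightly stronger statement that tracks an arbitrary cover rather than an optimal one: for every vertex cover $C$ of an $\eps$-dense $H$ whose smallest part is $V_k$, the collection $Extract(H)$ contains a set $S\subseteq C$ with $|S|\geq\eps|V_k|$. Taking $C$ optimal then yields the lemma, and this reformulation is exactly what lets me avoid ever comparing an optimal cover of $H$ with one of the induced hypergraph $H(v_i)$. Polynomiality is immediate: the recursion branches at most $p\leq|V_k|$ ways and has depth $k=O(1)$, so $|R|=|V(H)|^{O(k)}$ and each step is polynomial. The elementary fact behind the bound is worth recording first: since $V_k$ is smallest, every vertex lies in at most $P:=\prod_{j<k}|V_j|$ edges (a vertex of $V_j$ lies in $\prod_{l\neq j}|V_l|=P|V_k|/|V_j|\leq P$ of them), so $p=\lceil|E|/P\rceil\geq\eps|V_k|$ and indeed every cover has size at least $|E|/P\geq\eps|V_k|$.

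For the induction I would trace the recursion along $C$. In the base case $k=1$ the edges are singletons, $Extract$ returns $\bigcup_{e\in E}e$, all of these vertices must lie in any cover, and their number is $|E|\geq\eps|V_1|$. For $k\geq2$, let $v_1,\dots,v_p$ be the heaviest vertices of $V_k$ chosen by the procedure, and let $i$ be the first index with $v_i\notin C$; if there is none then $\{v_1,\dots,v_p\}\in R$ already lies in $C$ and has size $p\geq\eps|V_k|$, and we are done. Otherwise $R_i=\{v_1,\dots,v_{i-1}\}\subseteq C$, and since $v_i\notin C$ the restriction $C\cap V(H(v_i))$ is a cover of $H(v_i)$. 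I apply the induction hypothesis to $H(v_i)$ with this induced cover, obtaining $S'\in Extract(H(v_i))$ with $S'\subseteq C\cap V(H(v_i))$ and $|S'|\geq\eps_i\cdot s$, where $\eps_i=d_H(v_i)/\prod_{j<k}|V_j\cap N_H(v_i)|$ is the density of $H(v_i)$ and $s$ is the size of its smallest part. The output set is $S=R_i\cup S'\in R$.

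The containment $S\subseteq C$ is then clear, since $R_i\subseteq C$ by the choice of $i$ and $S'\subseteq C$ by the induction hypothesis. (Unwinding the recursion, this is the statement that after descending through vertices $v^{(k)},\dots,v^{(2)}$, none of which lie in $C$, every vertex surviving to the $1$-uniform bottom level is forced into $C$.) The real work is the size estimate, which I expect to be the main obstacle. One exploits the trade-off hidden in the choice of $i$: if the committed prefix is already long, $a:=i-1\geq\eps|V_k|$, then $|S|\geq a$ suffices; otherwise $v_i$, being heaviest among the $|V_k|-(i-1)$ vertices not yet discarded and since the $i-1$ heavier ones carry at most $(i-1)P$ edges, must satisfy $d_H(v_i)\geq(|E|-(i-1)P)/(|V_k|-(i-1))\geq(\eps|V_k|-a)P/(|V_k|-a)$. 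Because every part of $H(v_i)$ is the trace of one of the original parts $V_1,\dots,V_{k-1}$, all of which have size at least $|V_k|$, the recursive guarantee rearranges to $|S'|\geq\eps_i s\geq d_H(v_i)\,|V_k|/P$. Substituting the degree bound gives $|S|=a+|S'|\geq a+(\eps|V_k|-a)|V_k|/(|V_k|-a)$, and the inequality $a+\frac{(\eps|V_k|-a)|V_k|}{|V_k|-a}\geq\eps|V_k|$ reduces to $a\geq0$. The point requiring the most care is making this accounting uniform down the recursion: I would keep $V_k$ the smallest part at every level (re-sorting the induced parts before recursing), so that both the per-vertex bound $P$ and the inequality $|V_j|\geq|V_k|$ used to pass from $\eps_i s$ to $d_H(v_i)\,|V_k|/P$ remain valid for each $H(v_i)$.
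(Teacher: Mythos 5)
Your proof is correct and follows essentially the same route as the paper: induction on $k$ tracing the procedure, the case split between the full prefix $\{v_1,\dots,v_p\}\subseteq C$ and the first index $i$ with $v_i\notin C$, the same average-degree bound $d_H(v_i)\geq (\eps|V_k|-a)\prod_{l\in[k-1]}|V_l|\,/\,(|V_k|-a)$ for the heaviest remaining vertex, and the same closing algebra showing $a+|S'|\geq\eps|V_k|$. The one genuine refinement is your strengthened induction hypothesis quantifying over \emph{arbitrary} covers $C$: the paper instead asserts that an optimal cover of $H$ ``contains an optimal vertex cover of $H(v_u)$,'' which is not literally accurate (the restriction $OPT\cap V(H(v_u))$ is a cover of $H(v_u)$ but need not be optimal), so your reformulation --- together with the explicit re-sorting of the induced parts so the smallest part stays last at every level of the recursion --- tightens two details the paper glosses over, while leaving the argument's substance unchanged.
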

As a consequence, we obtain directly:
\begin{corollary}\label{corollary:extract}
Given  an $\epsilon$-dense $k$-partite $k$-uniform hypergraph $H=(V_1,..,V_k,E(H))$
with $k\geq 1$, the cardinality of an optimal vertex cover of $H$ is bounded from below by $\epsilon |V_k|$.  
\end{corollary}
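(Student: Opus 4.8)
The plan is to prove, by strong induction on $k$, a statement slightly stronger than the one asserted: that for \emph{every} vertex cover $C$ of an $\epsilon$-dense $k$-partite $k$-uniform hypergraph $H$ (not merely an optimal one), the collection $R=Extract(H)$ contains some $S\subseteq C$ with $|S|\ge\epsilon|V_k|$. Strengthening to arbitrary covers is exactly what makes the recursion close and is the main conceptual obstacle: when we descend into an induced hypergraph we will only know that the trace of $C$ there is \emph{some} cover, with no reason to be optimal, so an induction phrased only for optimal covers would break. Applying the strengthened statement to an optimal cover then gives the lemma, and Corollary~\ref{corollary:extract} is immediate since $S\subseteq C$ forces $|C|\ge|S|\ge\epsilon|V_k|$.

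For the base case $k=1$ every edge is a singleton, so the unique minimal cover is $U=\bigcup_{e\in E}e$, which is contained in every cover; since $Extract$ returns $\{U\}$ and $\epsilon$-density gives $|U|=|E|\ge\epsilon|V_1|$, the claim holds. For the inductive step fix a cover $C$ and let $v_1,\dots,v_p$ be the heaviest vertices of $V_k$ chosen by the procedure, with $p=\lceil|E|/\prod_{l\in[k-1]}|V_l|\rceil$. I would first record the two elementary bounds $\epsilon|V_k|\le p$ (from $\epsilon$-density) and $d_H(v)\le\prod_{l\in[k-1]}|V_l|$ for every $v\in V_k$ (the maximum possible degree of a vertex of the smallest part). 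If $\{v_1,\dots,v_p\}\subseteq C$ we are done, as this set lies in $R$ and has size $p\ge\epsilon|V_k|$. Otherwise let $i$ be the smallest index with $v_i\notin C$, so that $R_i=\{v_1,\dots,v_{i-1}\}\subseteq C$. Since $v_i\notin C$, every edge through $v_i$ is covered by another of its vertices, hence the trace $C\cap V(H(v_i))$ is a cover of the $(k-1)$-partite induced hypergraph $H(v_i)$. Applying the induction hypothesis to this cover yields $S'\in Extract(H(v_i))$ with $S'\subseteq C$ and $|S'|\ge\epsilon_i\cdot(\text{smallest part of }H(v_i))$, where $\epsilon_i$ is the density of $H(v_i)$; by construction $R_i\cup S'\in R$ and $R_i\cup S'\subseteq C$.

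The heart of the argument, and the step I expect to be the main technical obstacle, is the remaining size bound $(i-1)+|S'|\ge\epsilon|V_k|$. Writing $Q=\prod_{l\in[k-1]}|V_l|$, I would first show that the recursive guarantee dominates a clean quantity,
\[ |S'|\;\ge\;\epsilon_i\cdot(\text{smallest part of }H(v_i))\;\ge\;\frac{d_H(v_i)}{\prod_{l=1}^{k-2}|V_l|}, \]
obtained by bounding each induced part $|V_l\cap N_H(v_i)|\le|V_l|$ and cancelling the smallest induced part against the factor it contributes, noting that excluding the smallest of $|V_1|,\dots,|V_{k-1}|$ maximises the remaining product. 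On the other side, ordering all of $V_k$ by decreasing degree and splitting $|E|=\sum_{v\in V_k}d_H(v)$ at position $i$ gives $|E|\le(i-1)Q+|V_k|\,d_H(v_i)$, while $\epsilon$-density gives $|E|\ge\epsilon Q|V_k|$. Dividing by $Q$ and using $|V_k|\le|V_{k-1}|$, so that $|V_k|/Q\le 1/\prod_{l=1}^{k-2}|V_l|$, produces $\epsilon|V_k|\le(i-1)+d_H(v_i)/\prod_{l=1}^{k-2}|V_l|$, which combined with the displayed lower bound on $|S'|$ closes the induction. Finally, the recursion $|R|\le 1+p\,|Extract(H(v_i))|$ together with $p\le|V_k|\le n$ and $k=O(1)$ shows that the size of $R$ and the running time are polynomial, completing the proof.
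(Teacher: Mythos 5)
Your proof is correct, and it follows the paper's overall route: the corollary is obtained from the guarantee of the procedure $Extract(\cdot)$, proved by induction on $k$ via the $p$ heaviest vertices of $V_k$. The genuinely different (and better) ingredient is your induction hypothesis. The paper phrases the invariant in terms of \emph{optimal} covers and, at the descent step, asserts that an optimal cover $OPT$ of $H$ ``contains an optimal vertex cover of $H(v_u)$''; that is false in general, since the trace $OPT\cap V(H(v_u))$ is merely \emph{some} cover of $H(v_u)$ and need not be, or contain, a minimum one, so the paper's hypothesis does not literally yield a set $S_u\subseteq OPT$. Your strengthening to arbitrary covers is exactly the repair needed to close the recursion, and it still gives the corollary by instantiating $C=OPT$. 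Quantitatively the two arguments coincide: your split $|E|\le (i-1)\,Q+|V_k|\,d_H(v_i)$ with $Q=\prod_{l\in[k-1]}|V_l|$ is the same averaging estimate the paper derives for the heaviest vertex of $V_k\setminus R_u$ (the paper's denominator $|V_k\setminus R_u|$ is marginally sharper, but both relax it to $|V_k|$ in the end). Your bound $|S'|\ge d_H(v_i)/\prod_{l=1}^{k-2}|V_l|$ also treats more carefully a second loose point in the paper: there, the induction hypothesis is invoked with the factor $|V_{k-1}|$ although the smallest part of $H(v_u)$ may be strictly smaller than $|V_{k-1}|$; your cancellation of the smallest induced part, together with the observation that deleting the smallest of $|V_1|,\dots,|V_{k-1}|$ maximizes the remaining product, reaches the same final bound $\epsilon|V_k|-(i-1)$ legitimately. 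In short, the paper's write-up is shorter, while yours is airtight and additionally shows that $R$ contains a large subset of \emph{every} vertex cover, not only of an optimal one.
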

Before we prove Lemma~\ref{lem:extract}, we describe the main idea of the proof.
Let $OPT$ denote an optimal vertex cover of $H$. The procedure $Extract(\cdot)$
tests for the set $R=\{v_1,..,v_p\}$ of the $p$ heaviest vertices of $V_k$,
if $\{v_1,..,v_{u-1}\}\subseteq OPT$ and $v_u\not \in OPT$ for every $u\in [p]$.
Clearly, either $R\subseteq OPT$ or there exists a $v_u$ such that $v_u\not \in OPT$.
If the procedure already possesses a part of $OPT$ denoted by $R_u$, then, $Extract(\cdot)$
tries to obtain a large part of an optimal vertex cover of the $v_u$-induced hypergraph
$H(v_u)$. Hence, we have to show that $H(v_u)$ must still be dense enough.\\ 
We now give the proof of Lemma~\ref{lem:extract}.
\begin{proof}
The proof of Lemma~\ref{lem:extract} will be split in several parts.
In particular, we show that given an $\eps$-dense $k$-partite $k$-uniform 
hypergraph $H=(V_1,..,V_k,E(H))$, the procedure
$Extract(\cdot )$ and its output $R$ possess the following properties:
\begin{enumerate}
\item $Extract(\cdot )$ constructs $R$ in polynomial time and the cardinality of $R$ is $O(n^k)$.
\item There is a $S\in R$ such that $S$ is a subset of an optimal vertex cover of $H$.
\item For every $S\in R$, the cardinality of $S$ is at least $|S|\geq \epsilon |V_k|$.
\end{enumerate}
(1.) Clearly, $R$ is upper bounded by $|V_1|^k=O(n^k)$ and 
therefore, the running time of $Extract(\cdot)$ is 
$O(n^k)$.\\
(2.) and (3.) We prove the remaining properties by induction.
If we have $k=1$, the set $\bigcup_{e\in E(H)}e$ is by definition
an optimal vertex cover of $H=(V_1,E(H))$. Since $H$ is
$\epsilon$-dense,  the cardinality of $|E(H)|$ is lower bounded by
$\epsilon|V_1|$.\\
We assume that $k>1$.     
Let $H=(V_1,..,V_k,E(H))$ be an $\epsilon$-dense 
$k$-partite $k$-uniform hypergraph and $OPT\subseteq V(H)$ an
optimal vertex  cover of $H$. 
Let $(v_1,..,v_p)$ be the vector consisting  of the 
first $p=\left\lceil \frac{|E(H)|}{\prod_{l\in [k-1]}|V_l|}\right\rceil$ 
heaviest vertices of $V_k$ with $d_{H}(v_i)\geq d_{H}(v_{i+1})$.
If $\{v_1,..,v_p\}$ is contained
in $OPT$, we have constructed a subset of an optimal vertex cover
with cardinality 
$$p=\left\lceil \frac{|E(H)|}{\prod\limits_{l\in [k-1]}|V_l|}\right\rceil\quad\geq\quad 
\frac{\epsilon \prod\limits_{l\in [k]}|V_l|}{\prod\limits_{l\in [k-1]}|V_l|} 
\quad\geq\quad \epsilon|V_k|.$$
Otherwise, there is an $u\in [p]$ such that $R_u\subseteq OPT$ 
and $v_u\not \in OPT$. But this means that an optimal vertex cover
of $H$ contains an optimal vertex cover of the $v_u$-induced
$(k-1)$-partite $(k-1)$-uniform hypergraph $H(v_u)$
in order to cover the edges $e\in \{e\in E \mid v_u\in e\}$.
The situation is depicted in Figure~\ref{fig:inducedex}.\\

\begin{figure}[h]
\begin{center}
\setlength{\unitlength}{0.00083333in}
\begingroup\makeatletter\ifx\SetFigFont\undefined%
\gdef\SetFigFont#1#2#3#4#5{%
  \reset@font\fontsize{#1}{#2pt}%
  \fontfamily{#3}\fontseries{#4}\fontshape{#5}%
  \selectfont}%
\fi\endgroup%
{\renewcommand{\dashlinestretch}{30}
\begin{picture}(6399,3034)(0,-10)
\put(612,1948){\blacken\ellipse{90}{90}}
\put(612,1948){\ellipse{90}{90}}
\put(612,2248){\blacken\ellipse{90}{90}}
\put(612,2248){\ellipse{90}{90}}
\put(612,1348){\blacken\ellipse{90}{90}}
\put(612,1348){\ellipse{90}{90}}
\put(2262,1048){\blacken\ellipse{90}{90}}
\put(2262,1048){\ellipse{90}{90}}
\put(2262,1348){\blacken\ellipse{90}{90}}
\put(2262,1348){\ellipse{90}{90}}
\put(4437,1048){\blacken\ellipse{90}{90}}
\put(4437,1048){\ellipse{90}{90}}
\put(4437,1348){\blacken\ellipse{90}{90}}
\put(4437,1348){\ellipse{90}{90}}
\put(5787,1048){\blacken\ellipse{90}{90}}
\put(5787,1048){\ellipse{90}{90}}
\put(5787,1348){\blacken\ellipse{90}{90}}
\put(5787,1348){\ellipse{90}{90}}
\put(5787,2248){\blacken\ellipse{90}{90}}
\put(5787,2248){\ellipse{90}{90}}
\put(4437,2248){\blacken\ellipse{90}{90}}
\put(4437,2248){\ellipse{90}{90}}
\put(2262,2323){\blacken\ellipse{90}{90}}
\put(2262,2323){\ellipse{90}{90}}
\put(2262,1768){\blacken\ellipse{90}{90}}
\put(2262,1768){\ellipse{90}{90}}
\put(612,1123){\blacken\ellipse{90}{90}}
\put(612,1123){\ellipse{90}{90}}
\put(4437,1798){\blacken\ellipse{90}{90}}
\put(4437,1798){\ellipse{90}{90}}
\put(5787,1873){\blacken\ellipse{90}{90}}
\put(5787,1873){\ellipse{90}{90}}
\dottedline{45}(462,1723)(462,1573)
\put(117,478){\arc{210}{1.5708}{3.1416}}
\put(117,2368){\arc{210}{3.1416}{4.7124}}
\put(807,2368){\arc{210}{4.7124}{6.2832}}
\put(807,478){\arc{210}{0}{1.5708}}
\path(12,478)(12,2368)
\path(117,2473)(807,2473)
\path(912,2368)(912,478)
\path(807,373)(117,373)
\thicklines
\dottedline{68}(3162,1423)(3612,1423)
\thinlines
\put(1917,478){\arc{210}{1.5708}{3.1416}}
\put(1917,2368){\arc{210}{3.1416}{4.7124}}
\put(2607,2368){\arc{210}{4.7124}{6.2832}}
\put(2607,478){\arc{210}{0}{1.5708}}
\path(1812,478)(1812,2368)
\path(1917,2473)(2607,2473)
\path(2712,2368)(2712,478)
\path(2607,373)(1917,373)
\put(4092,478){\arc{210}{1.5708}{3.1416}}
\put(4092,2368){\arc{210}{3.1416}{4.7124}}
\put(4782,2368){\arc{210}{4.7124}{6.2832}}
\put(4782,478){\arc{210}{0}{1.5708}}
\path(3987,478)(3987,2368)
\path(4092,2473)(4782,2473)
\path(4887,2368)(4887,478)
\path(4782,373)(4092,373)
\put(5442,478){\arc{210}{1.5708}{3.1416}}
\put(5442,2368){\arc{210}{3.1416}{4.7124}}
\put(6132,2368){\arc{210}{4.7124}{6.2832}}
\put(6132,478){\arc{210}{0}{1.5708}}
\path(5337,478)(5337,2368)
\path(5442,2473)(6132,2473)
\path(6237,2368)(6237,478)
\path(6132,373)(5442,373)
\dashline{60.000}(612,1123)(1812,2623)(6387,2623)
	(6387,898)(1737,898)(612,1123)
\path(2362,1623)(2401,1629)(2441,1635)
	(2481,1640)(2521,1644)(2562,1649)
	(2604,1653)(2646,1656)(2689,1659)
	(2733,1662)(2777,1664)(2821,1666)
	(2867,1668)(2913,1669)(2959,1670)
	(3006,1671)(3054,1671)(3102,1672)
	(3150,1671)(3198,1671)(3247,1670)
	(3296,1669)(3344,1668)(3393,1667)
	(3441,1665)(3490,1664)(3538,1662)
	(3585,1660)(3633,1658)(3679,1655)
	(3725,1653)(3771,1651)(3816,1648)
	(3860,1646)(3903,1644)(3946,1641)
	(3988,1639)(4029,1637)(4070,1634)
	(4110,1632)(4149,1630)(4187,1628)
	(4225,1626)(4263,1625)(4300,1623)
	(4344,1621)(4388,1620)(4432,1618)
	(4476,1617)(4519,1616)(4563,1615)
	(4606,1614)(4649,1614)(4693,1614)
	(4736,1614)(4778,1614)(4821,1614)
	(4863,1614)(4906,1615)(4947,1616)
	(4988,1617)(5029,1618)(5069,1619)
	(5108,1621)(5146,1622)(5183,1624)
	(5220,1626)(5255,1628)(5289,1630)
	(5322,1632)(5354,1634)(5385,1637)
	(5414,1639)(5443,1642)(5470,1644)
	(5496,1647)(5521,1649)(5545,1652)
	(5568,1655)(5591,1658)(5612,1660)
	(5649,1666)(5683,1671)(5717,1677)
	(5748,1683)(5778,1689)(5806,1696)
	(5833,1703)(5858,1711)(5881,1719)
	(5902,1726)(5922,1735)(5939,1743)
	(5954,1751)(5968,1760)(5979,1768)
	(5989,1777)(5996,1785)(6003,1794)
	(6008,1802)(6012,1810)(6016,1823)
	(6018,1836)(6018,1849)(6016,1864)
	(6012,1878)(6007,1893)(6000,1907)
	(5991,1921)(5982,1935)(5971,1947)
	(5960,1958)(5948,1969)(5937,1978)
	(5925,1985)(5915,1991)(5905,1996)
	(5894,2000)(5881,2004)(5868,2007)
	(5853,2010)(5837,2013)(5819,2015)
	(5799,2016)(5777,2017)(5753,2017)
	(5728,2017)(5700,2017)(5671,2016)
	(5641,2015)(5608,2014)(5574,2012)
	(5537,2010)(5516,2009)(5495,2008)
	(5472,2007)(5449,2006)(5424,2005)
	(5398,2003)(5371,2002)(5342,2001)
	(5312,1999)(5281,1998)(5249,1997)
	(5215,1995)(5180,1994)(5144,1993)
	(5107,1991)(5069,1990)(5029,1989)
	(4989,1988)(4948,1987)(4907,1986)
	(4864,1985)(4822,1984)(4778,1984)
	(4735,1983)(4691,1983)(4647,1982)
	(4603,1982)(4558,1982)(4514,1982)
	(4469,1983)(4424,1983)(4379,1984)
	(4333,1985)(4287,1985)(4251,1986)
	(4214,1987)(4177,1988)(4139,1989)
	(4101,1990)(4061,1992)(4021,1993)
	(3981,1994)(3939,1995)(3897,1996)
	(3854,1998)(3810,1999)(3765,2000)
	(3720,2001)(3674,2002)(3628,2003)
	(3581,2003)(3533,2004)(3485,2004)
	(3437,2004)(3388,2004)(3339,2004)
	(3290,2003)(3241,2003)(3192,2002)
	(3143,2001)(3095,1999)(3046,1997)
	(2998,1995)(2950,1993)(2903,1990)
	(2856,1987)(2810,1984)(2764,1980)
	(2719,1976)(2675,1972)(2631,1967)
	(2587,1962)(2544,1957)(2502,1951)
	(2460,1944)(2419,1938)(2378,1931)
	(2337,1923)(2292,1914)(2248,1905)
	(2204,1895)(2159,1884)(2115,1873)
	(2070,1861)(2025,1848)(1981,1835)
	(1936,1822)(1891,1807)(1846,1793)
	(1800,1778)(1755,1762)(1710,1746)
	(1665,1729)(1621,1712)(1576,1695)
	(1533,1678)(1489,1660)(1446,1643)
	(1404,1625)(1363,1607)(1322,1590)
	(1282,1572)(1244,1554)(1206,1537)
	(1170,1520)(1134,1503)(1100,1487)
	(1067,1471)(1035,1455)(1004,1440)
	(975,1425)(946,1411)(919,1397)
	(893,1384)(868,1371)(844,1359)
	(821,1347)(800,1335)(758,1314)
	(720,1294)(684,1276)(651,1258)
	(621,1242)(592,1227)(567,1213)
	(543,1200)(522,1187)(504,1176)
	(487,1166)(473,1156)(462,1147)
	(452,1139)(444,1132)(438,1125)
	(433,1118)(429,1111)(427,1105)
	(425,1098)(423,1088)(422,1077)
	(422,1065)(423,1053)(425,1040)
	(429,1027)(433,1014)(439,1001)
	(445,990)(452,979)(460,969)
	(469,961)(477,954)(487,948)
	(499,943)(512,940)(526,938)
	(541,938)(557,939)(573,941)
	(588,943)(603,947)(617,950)
	(629,954)(640,957)(650,960)
	(655,962)(660,964)(666,966)
	(671,968)(678,971)(685,975)
	(695,979)(705,984)(718,990)
	(733,998)(751,1007)(770,1017)
	(793,1029)(818,1042)(846,1056)
	(877,1073)(911,1091)(950,1110)
	(969,1121)(990,1131)(1011,1143)
	(1034,1154)(1058,1167)(1083,1180)
	(1110,1193)(1137,1207)(1166,1221)
	(1197,1236)(1229,1251)(1261,1266)
	(1296,1282)(1331,1298)(1368,1314)
	(1406,1331)(1444,1347)(1484,1364)
	(1525,1380)(1566,1397)(1608,1413)
	(1651,1429)(1694,1445)(1738,1461)
	(1782,1476)(1826,1491)(1871,1505)
	(1915,1519)(1960,1532)(2005,1544)
	(2049,1556)(2094,1568)(2138,1579)
	(2183,1589)(2227,1598)(2272,1607)
	(2317,1615)(2362,1623)
\put(162,1048){\makebox(0,0)[lb]{\smash{{\SetFigFont{12}{14.4}{\rmdefault}{\mddefault}{\updefault}$v_u$}}}}
\put(162,1348){\makebox(0,0)[lb]{\smash{{\SetFigFont{12}{14.4}{\rmdefault}{\mddefault}{\updefault}$v_{u-1}$}}}}
\put(162,1873){\makebox(0,0)[lb]{\smash{{\SetFigFont{12}{14.4}{\rmdefault}{\mddefault}{\updefault}$v_2$}}}}
\put(162,2173){\makebox(0,0)[lb]{\smash{{\SetFigFont{12}{14.4}{\rmdefault}{\mddefault}{\updefault}$v_1$}}}}
\put(312,73){\makebox(0,0)[lb]{\smash{{\SetFigFont{12}{14.4}{\rmdefault}{\mddefault}{\updefault}$V_k$}}}}
\put(2112,73){\makebox(0,0)[lb]{\smash{{\SetFigFont{12}{14.4}{\rmdefault}{\mddefault}{\updefault}$V_{k-1}$}}}}
\put(4287,73){\makebox(0,0)[lb]{\smash{{\SetFigFont{12}{14.4}{\rmdefault}{\mddefault}{\updefault}$V_2$}}}}
\put(5712,73){\makebox(0,0)[lb]{\smash{{\SetFigFont{12}{14.4}{\rmdefault}{\mddefault}{\updefault}$V_1$}}}}
\put(2562,2848){\makebox(0,0)[lb]{\smash{{\SetFigFont{12}{14.4}{\familydefault}{\mddefault}{\updefault}\textbf{$v_u$-induced Hypergraph $H(v_u)$}}}}}
\end{picture}
}
\end{center}
\caption{The $v_u$-induced  $(k-1)$-partite $(k-1)$-uniform hypergraph $H(v_u)$}
\label{fig:inducedex}
\end{figure}
By our induction hypothesis, $Extract(H(v_u))$ contains
a set $S_u$ which is a subset of a minimum vertex cover
of $H(v_u)$ and of $OPT$. The only claim, which remains to
be proven, is that the cardinality of $S_u$ is large enough.
More precisely, we show that $|S_u|$ can be lower bounded
by $\epsilon|V_k|-|R_u|$. Therefore, we need to analyze
the density of the $v_u$-induced hypergraph $H(v_u)$. The edge 
set of $H(v_u)$ is given by $\{e\setminus\{v_u\}\mid v_u\in e\in E\}$.
Thus, we have to obtain a lower bound on the degree of $v_u$.
Since $|\{e\in E \mid e\cap R_u \neq \emptyset \}|$ is upper bounded
by $|R_u|\prod_{l\in [k-1]}|V_l|$, the vertices in $V_k\setminus R_u$ possess
the average degree of at least 
\begin{eqnarray}
\frac{\sum\limits_{v\in V_k\setminus R_u} deg_{H}(v)  }{|V_k\setminus R_u|}
& \geq &
\frac{\epsilon \prod\limits_{l\in [k]}|V_l|-|\{e\in E \mid e\cap R_u \neq \emptyset \}|}{|V_k\setminus R_u|} \\[1ex]
& \geq &
\frac{\epsilon \prod\limits_{l\in [k]}|V_l|-|R_u|\prod\limits_{l\in [k-1]}|V_l|}{|V_k\setminus R_u|} \\[1ex]
& \geq &
\frac{(\epsilon|V_k|-|R_u|) \prod\limits_{l\in [k-1]}|V_l|}{|V_k\setminus R_u|} 
\end{eqnarray}
Since the heaviest vertex in $V_k\setminus R_u $ must have a degree
of at least $\frac{(\epsilon|V_k|-|R_u|) \prod_{l\in [k-1]}|V_l|}{|V_k\setminus R_u|}$, 
we deduce that the edge set of $H(v_u)$ denoted by $E_u$ can be lower bounded  by
$$|E_u| \quad\geq\quad \frac{(\epsilon|V_k|-|R_u|) \prod\limits_{l\in [k-1]}|V_l|}{|V_k\setminus R_u|} $$
Let $H(v_u)$ be defined by $(V^u_1,..,V^u_{k-1},E_u)$ with $|V^u_i| \leq |V_i|$ 
for all $i\in [k-1]$. By our induction hypothesis, the size of every set contained in $Extract(\cdot)$
is at least
 \begin{eqnarray}
 \frac{ |E_u| }{\prod\limits_{l\in [k-1]}|V^u_l| }|V_{k-1}|
& \geq &
\frac{(\epsilon|V_k|-|R_u|) \prod\limits_{l\in [k-1]}|V_l|}{|V_k\setminus R_u|\prod\limits_{l\in [k-1]}|V^u_l|}|V_{k-1}| \\[1ex]
& \geq &
\frac{(\epsilon|V_k|-|R_u|) \prod\limits_{l\in [k-1]}|V_l|}{|V_k\setminus R_u|\prod\limits_{l\in [k-1]}|V_l|}|V_{k-1}| \\[1ex]
& \geq &
\frac{(\epsilon|V_k|-|R_u|) }{|V_k\setminus R_u|}|V_{k}| \\[1ex]
& \geq &
\frac{(\epsilon|V_k|-|R_u|) }{|V_k|}|V_{k}| \,=\, \epsilon|V_k|-|R_u|
\end{eqnarray}

In $(4)$, we used the fact that $|V^u_i| \leq |V_i|$ 
for all $i\in [k-1]$. Whereas in $(5)$, we used our 
assumption $|V_k| \leq |V_{k-1}|$. 
All in all, we obtain
\begin{eqnarray}
|R_u \cup S_u|
& \geq &
|R_u|+(|\epsilon|V_k|-|R_u|)\;=\;\epsilon|V_k|. 
\end{eqnarray}
Clearly, this argumentation on the size of $R_u\cup S_u$ holds for every $u\in [p]$ and 
the proof of Lemma~\ref{lem:extract} follows.  
\end{proof}
Before we state our approximation algorithm and prove Theorem~\ref{thm:main1}, we show 
that the bound in Lemma~\ref{lem:extract} is tight.
In particular, we define a family of $\epsilon$-dense $k$-partite $k$-uniform hypergraphs $H(k,l,\epsilon)=(V_1,..,V_k,E(H_l))$ with 
$|V_i|=\frac{|V|}{k}$ for all $i\in [k]$, $k\geq 1$, $\epsilon \in \{\frac ul\mid u\in [l]\}$ and $l\geq 1$ such that $Extract(\cdot)$ returns a subset 
of an optimal vertex cover with cardinality of exactly $\epsilon |V_k|$.   

\begin{lemma}\label{lem:tight}
The bound of Lemma~\ref{lem:extract} is tight.
\end{lemma}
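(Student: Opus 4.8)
The plan is to prove Lemma~\ref{lem:tight} by exhibiting, for every admissible choice of parameters $k\geq 1$, $l\geq 1$, and $\epsilon\in\{\frac{u}{l}\mid u\in[l]\}$, a specific $\epsilon$-dense balanced $k$-partite $k$-uniform hypergraph $H(k,l,\epsilon)$ on which the procedure $Extract(\cdot)$ returns only subsets of size \emph{exactly} $\epsilon|V_k|$, thereby matching the lower bound of Lemma~\ref{lem:extract} from above. Since Lemma~\ref{lem:extract} already guarantees that every set in the output $R$ has cardinality at least $\epsilon|V_k|$, it suffices to construct $H$ so that the extraction procedure is forced to terminate with sets of cardinality at most $\epsilon|V_k|$; equality then follows. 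I would proceed by induction on $k$, matching the recursive structure of $Extract(\cdot)$.

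For the construction I would take all parts of equal size $|V_i|=\frac{|V|}{k}=:n_0$ and design the edge set so that the density is met with equality and, crucially, so that the degree sequence in $V_k$ is as concentrated as possible. The natural choice is to let exactly $\epsilon n_0$ vertices of $V_k$ carry all the edges: pick a subset $W\subseteq V_k$ with $|W|=\epsilon n_0$ (an integer because $\epsilon=\frac{u}{l}$ and we may scale $n_0$ to be a multiple of $l$), and let each $v\in W$ be incident to the \emph{full} set of $\prod_{l\in[k-1]}|V_l|$ possible edges through it, while every $v\in V_k\setminus W$ has degree $0$. Then $|E(H)|=\epsilon n_0\prod_{l\in[k-1]}|V_l|=\epsilon\prod_{l\in[k]}|V_l|$, so $H$ is $\epsilon$-dense with equality, and $p=\lceil|E|/\prod_{l\in[k-1]}|V_l|\rceil=\epsilon n_0$. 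The heaviest $p$ vertices of $V_k$ are then exactly the vertices of $W$, each inducing a hypergraph $H(v_i)$ that is complete on its $(k-1)$ parts, i.e.\ $1$-dense. I would verify that $W$ itself is already a vertex cover of $H$ (every edge passes through some vertex of $W$) and indeed an optimal one, so that $R=\{W\}$ with $|W|=\epsilon n_0=\epsilon|V_k|$ is the relevant tight set.

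The main subtlety is the recursive branch of $Extract(\cdot)$: even when $R=\{v_1,\dots,v_p\}\subseteq OPT$, the procedure also recurses into each $H(v_i)$ and outputs sets of the form $R_i\cup S$. I must ensure none of these exceed $\epsilon|V_k|$. Since each $H(v_i)$ is complete, it is $1$-dense, and by the inductive tightness of the construction the recursive call $Extract(H(v_i))$ returns sets of size exactly $1\cdot|V_{k-1}|=n_0$; combined with $|R_i|=i-1$ this gives $|R_i\cup S|=(i-1)+n_0$, which for $i\leq p=\epsilon n_0\leq n_0$ is at most $n_0-1+\epsilon n_0$. Here I expect the delicate point to be reconciling these recursive outputs with the target $\epsilon|V_k|=\epsilon n_0$: the recursive sets are genuinely larger, but they are \emph{not} subsets of $OPT$, since once $W\subseteq OPT$ there is no vertex $v_u\notin OPT$ to trigger the recursion meaningfully. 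Thus the correct reading of ``tight'' is that the unique set in $R$ which Lemma~\ref{lem:extract} certifies to lie inside $OPT$ has size exactly $\epsilon|V_k|$; I would state the lemma's conclusion in precisely this form and show that $W$ is that certified set while all strictly larger members of $R$ fail to be contained in any optimal cover.

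Finally I would assemble the induction: the base case $k=1$ is immediate, as $H(1,l,\epsilon)$ consists of $\epsilon n_0$ vertices each forming a singleton edge, so $\bigcup_{e\in E}e$ has size exactly $\epsilon n_0=\epsilon|V_1|$. For the inductive step, the equality $|E(H)|=\epsilon\prod_{l\in[k]}|V_l|$ forces $p=\epsilon n_0$ exactly (the ceiling is attained without rounding because $\epsilon n_0\in\mathds{N}$), and the optimal cover $W$ of size $\epsilon n_0$ witnesses that the bound of Corollary~\ref{corollary:extract} is met with equality. The hardest part of writing this cleanly will be the bookkeeping around which output set is the ``certified'' one, together with checking the integrality condition $\epsilon n_0\in\mathds{N}$ across the recursion; both are handled by choosing $|V|$ divisible by $kl$ and tracking that each induced hypergraph inherits the integrality automatically.
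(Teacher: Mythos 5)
Your proposal is correct and is essentially the paper's own proof: the same family of tight examples (all edges passing through a subset $W\subseteq V_k$ of $\epsilon|V_k|$ vertices, with the construction otherwise complete), the same density computation showing $\epsilon$-density holds with equality, and the same appeal to Corollary~\ref{corollary:extract} to certify that $W$ is an optimal vertex cover of size exactly $\epsilon|V_k|$. Your additional induction on $k$ and the bookkeeping about the larger sets produced by the recursive branches of $Extract(\cdot)$ are harmless but unnecessary: once $W$ is known to be an optimal cover of size $\epsilon|V_k|$, no subset of any optimal cover can be larger, so the certified set has size exactly $\epsilon|V_k|$ without tracking the recursion at all.
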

\begin{proof}
Let us define $H(k,p,\epsilon)=(V_1,..,V_k,E)$. For a fixed $p\geq 1$ and $k\geq 1$, 
every partition $V_i$ with $i\in [k]$ consists of a set of $l$ vertices. Let us fix a $\epsilon=\frac{u}{l} $ with 
$u\in  [l]$. Then, $H(k,l,\epsilon)$ contains the set $V^u_k\subseteq V_k$ of $u$ vertices such that
$E=\{\{v_1,v_2,..,v_k  \} \mid v_1\in V^u_k, v_2\in V_2,..,v_k\in V_k       \}$.
An example of such a hypergraph is depicted in Figure~\ref{fig:example}.\\

\begin{figure}[h]
\begin{center}
\setlength{\unitlength}{0.00083333in}
\begingroup\makeatletter\ifx\SetFigFont\undefined%
\gdef\SetFigFont#1#2#3#4#5{%
  \reset@font\fontsize{#1}{#2pt}%
  \fontfamily{#3}\fontseries{#4}\fontshape{#5}%
  \selectfont}%
\fi\endgroup%
{\renewcommand{\dashlinestretch}{30}
\begin{picture}(6627,1825)(0,-10)
\put(4515,1123){\blacken\ellipse{90}{90}}
\put(4515,1123){\ellipse{90}{90}}
\put(5865,1123){\blacken\ellipse{90}{90}}
\put(5865,1123){\ellipse{90}{90}}
\put(5865,1423){\blacken\ellipse{90}{90}}
\put(5865,1423){\ellipse{90}{90}}
\put(2340,1198){\blacken\ellipse{90}{90}}
\put(2340,1198){\ellipse{90}{90}}
\put(4695,1498){\blacken\ellipse{90}{90}}
\put(4695,1498){\ellipse{90}{90}}
\put(6240,1423){\blacken\ellipse{90}{90}}
\put(6240,1423){\ellipse{90}{90}}
\put(915,1573){\blacken\ellipse{90}{90}}
\put(915,1573){\ellipse{90}{90}}
\put(4815,1198){\blacken\ellipse{90}{90}}
\put(4815,1198){\ellipse{90}{90}}
\put(540,1273){\blacken\ellipse{90}{90}}
\put(540,1273){\ellipse{90}{90}}
\put(2415,1498){\blacken\ellipse{90}{90}}
\put(2415,1498){\ellipse{90}{90}}
\put(2490,898){\blacken\ellipse{90}{90}}
\put(2490,898){\ellipse{90}{90}}
\put(2715,1348){\blacken\ellipse{90}{90}}
\put(2715,1348){\ellipse{90}{90}}
\put(4665,748){\blacken\ellipse{90}{90}}
\put(4665,748){\ellipse{90}{90}}
\put(6165,823){\blacken\ellipse{90}{90}}
\put(6165,823){\ellipse{90}{90}}
\put(420,628){\arc{210}{1.5708}{3.1416}}
\put(420,1693){\arc{210}{3.1416}{4.7124}}
\put(960,1693){\arc{210}{4.7124}{6.2832}}
\put(960,628){\arc{210}{0}{1.5708}}
\path(315,628)(315,1693)
\path(420,1798)(960,1798)
\path(1065,1693)(1065,628)
\path(960,523)(420,523)
\put(2220,628){\arc{210}{1.5708}{3.1416}}
\put(2220,1693){\arc{210}{3.1416}{4.7124}}
\put(2760,1693){\arc{210}{4.7124}{6.2832}}
\put(2760,628){\arc{210}{0}{1.5708}}
\path(2115,628)(2115,1693)
\path(2220,1798)(2760,1798)
\path(2865,1693)(2865,628)
\path(2760,523)(2220,523)
\put(4395,628){\arc{210}{1.5708}{3.1416}}
\put(4395,1693){\arc{210}{3.1416}{4.7124}}
\put(4935,1693){\arc{210}{4.7124}{6.2832}}
\put(4935,628){\arc{210}{0}{1.5708}}
\path(4290,628)(4290,1693)
\path(4395,1798)(4935,1798)
\path(5040,1693)(5040,628)
\path(4935,523)(4395,523)
\put(5820,628){\arc{210}{1.5708}{3.1416}}
\put(5820,1693){\arc{210}{3.1416}{4.7124}}
\put(6360,1693){\arc{210}{4.7124}{6.2832}}
\put(6360,628){\arc{210}{0}{1.5708}}
\path(5715,628)(5715,1693)
\path(5820,1798)(6360,1798)
\path(6465,1693)(6465,628)
\path(6360,523)(5820,523)
\dashline{60.000}(915,1573)(1815,1723)(6615,1723)
	(6615,598)(1815,598)(915,1573)
\thicklines
\dottedline{68}(3390,1123)(3840,1123)
\thinlines
\dashline{60.000}(540,1273)(840,1123)
\dashline{60.000}(540,1273)(840,1423)
\path(315,1123)(316,1123)(319,1121)
	(328,1117)(343,1109)(364,1099)
	(389,1088)(416,1075)(444,1062)
	(470,1050)(494,1039)(516,1029)
	(537,1021)(555,1014)(571,1008)
	(587,1003)(603,998)(620,993)
	(637,989)(654,986)(672,983)
	(690,982)(708,980)(726,980)
	(744,980)(761,982)(778,983)
	(794,986)(810,989)(825,993)
	(840,998)(853,1003)(866,1008)
	(880,1014)(894,1021)(910,1029)
	(928,1039)(947,1050)(967,1062)
	(988,1075)(1009,1088)(1028,1099)
	(1044,1109)(1055,1117)(1062,1121)(1065,1123)
\put(390,73){\makebox(0,0)[lb]{\smash{{\SetFigFont{12}{14.4}{\rmdefault}{\mddefault}{\updefault}$V_k$}}}}
\put(2265,73){\makebox(0,0)[lb]{\smash{{\SetFigFont{12}{14.4}{\rmdefault}{\mddefault}{\updefault}$V_{k-1}$}}}}
\put(4440,73){\makebox(0,0)[lb]{\smash{{\SetFigFont{12}{14.4}{\rmdefault}{\mddefault}{\updefault}$V_2$}}}}
\put(5940,73){\makebox(0,0)[lb]{\smash{{\SetFigFont{12}{14.4}{\rmdefault}{\mddefault}{\updefault}$V_1$}}}}
\put(15,1423){\makebox(0,0)[lb]{\smash{{\SetFigFont{12}{14.4}{\rmdefault}{\mddefault}{\updefault}$V^u_k$}}}}
\end{picture}
}
\end{center}
\caption{An example of a hypergraph $H(k,l,\epsilon)$}
\label{fig:example}
\end{figure} 
\noindent Notice that $H(k,l,\epsilon)=(V_1,..,V_k,E)$
is $\epsilon$-dense, since 
$$\frac{|E|}{\prod\limits_{j\in [k]}|V_j|}= \frac{|V^u_k|}{|V_k|}=\frac{u}{l}=\epsilon .$$ 
The procedure $Extract(\cdot)$ returns a set $R$, in which $V^u_k$ is contained, since $V^u_k$
is the set of the $p$ heaviest vertices of $V_k$. Hence, we obtain $|V^u_k|=\frac{|V^u_k|}{|V_k|}|V_k|=\epsilon|V_k|$.
On the other hand, the remaining hypergraph $H'=(V_1,..,V_k\setminus V^u_k,E(H') )$ with 
edge set $E(H')=\{e\in E \mid e\cap V^u_k= \emptyset\}$ is already covered, since $E(H')$ is
by definition of $H(k,p,\epsilon)$ the empty set. Therefore, $V^u_k$ is a vertex cover of 
$H(k,p,\epsilon)$ and  since, according to Corollary~\ref{corollary:extract}, every 
vertex cover is bounded from below
by $\epsilon |V_k|$, $V^u_k$ must be an optimal vertex cover.     
\end{proof}

Next, we state our approximation algorithm for the Vertex Cover problem 
in $\epsilon$-dense $k$-partite $k$-uniform hypergraphs defined in
Figure~\ref{fig:approx}. The approximation
algorithm combines the procedure $Extract(\cdot )$ to generate a large enough
subset of an optimal vertex cover together with the $\frac k2$-approximation algorithm
due to Lov\'asz~\cite{L75} applied to the remaining instance.\\
\bigskip

\begin{figure}[ht]
\noindent\fbox{
\begin{minipage}{\textwidth - 4mm}
\small
\noindent\textbf{Algorithm $Approx(\cdot)$} \\[1ex]
Input: $\epsilon$-dense $k$-partite $k$-uniform hypergraph $H = (V_1,..,V_k,E)$
with $k\geq 3$\\[.5ex]
\begin{enumerate}
\item  $T=\{V_k\}$
\item invoke procedure $Extract(H)$ with output $R$ 
\item for all $S\in R$ do : 
\begin{enumerate}
\item  $H_S=(V(H)\setminus S, \{e\in E(H)\mid e\cap S=\emptyset\})$
\item obtain a $(\frac k2)$-approximate solution $S_k$ for $H_S$
\item $T=T\cup \{S_k\cup S \}$
\end{enumerate}
\item Return the smallest set in $T$\\
\end{enumerate}
\end{minipage}
}
\caption{Algorithm $Approx(\cdot)$}
\label{fig:approx}
\end{figure}

\newpage

We now prove Theorem~\ref{thm:main1}.
\begin{proof}
Let $H=(V_1,..,V_k,E)$ be an
$\eps$-dense $k$-partite $k$-uniform hypergraph.
From Lemma~\ref{lem:extract}, we know that the 
procedure $Extract(\cdot)$ returns in polynomial time a collection $C$ of subsets of $V(H)$ such that there
is a set $S$ in $C$, which is contained in an optimal vertex cover of $H$.
Moreover, we know that the size of $S$ is lower bounded by $\epsilon |V_k|$.\\ 
Next, we analyze the approximation ratio
of our approximation algorithm $Approx(\cdot)$. Clearly, the size of an optimal vertex cover
of $H$ is upper bounded by $|V_k|$. Let us denote by $OPT'$ the size of an optimal vertex
cover of the remaining hypergraph $H'$ defined by removing all edges $e$ of $H$ with 
$e\cap S\neq \emptyset$. Furthermore, let $S'$ be the solution of the $\frac k2$-approximation
algorithm applied to $H'$.  The approximation ratio of $Approx(\cdot)$ is bounded by
\begin{eqnarray}
\frac{|S|+|S'|  }   {|S|+|OPT'|}\,\,\leq\,\, \frac{|S|+\frac k2|OPT'|}{|S|+|OPT'|}
&\leq & \frac{k}{ \frac{k|S|+k|OPT'|}{|S|+\frac k2|OPT'|}    } \\
& \leq  &   \frac{k}{ \frac{2|S|+(k-2)|S|+k|OPT'|}{|S|+\frac k2|OPT'|}    }  \\
 & \leq  & \frac{k}{2+(k-2) \frac{|S|}{|S|+\frac k2|OPT'|}    }  \\
 &  \leq  & \frac{k}{2+(k-2) \frac{|S|}{|V_k|}    }  \\
 & \leq  &   \frac{k}{2+(k-2) \frac{\epsilon|V_k|}{|V_k|} }\\
   & \leq  &   \frac{k}{2+(k-2) \epsilon} 
\end{eqnarray}  
In $(11)$, we used the fact that the size of the output of $Approx(\cdot)$
is upper bounded by $|V_k|$. Therefore, we have $|S|+\frac k2|OPT'|\leq |V_k|$. 
In $(12)$, we know from Lemma~\ref{lem:extract} that $|S|\geq \epsilon|V_k|$.
\end{proof}

\section{Inapproximability Results}
\label{sec:lb}

In this section, we prove hardness results for the Vertex Cover problem 
restricted to $\ell$-wise $\epsilon$-dense balanced $k$-uniform $k$-partite
hypergraphs under the Unique Games Conjecture~\cite{K02} as well as under the assumption $P\neq NP$.

\subsection{UGC-Hardness}
The Unique Games-hardness result of~\cite{GS10b} was obtained by applying the result
of Kumar et al.~\cite{KMT11}, with a modification to the LP integrality gap
due to Ahorani et al.~\cite{AHK96}.
More precisely, they proved the following inapproximability result:
  \begin{theorem}\cite{GS10b}\label{ugcmain2}
  For every $\delta>0$ and $k\geq 3$, there exist a $n_{\delta}$ such that given  
  $H=(V_1,..,V_k,E(H))$ as an instance of the 
Vertex Cover problem in balanced $k$-partite $k$-uniform hypergraphs with $|V(H)|\geq n_{\delta}$, 
the  following is UGC-hard to decide:
\begin{itemize}
\item The size of a vertex cover of $H$ is at least $|V|\left(\frac{1}{2(k-1)}-\delta\right)$.
\item The size of an optimal vertex cover of $H$ is at 
most $|V|\left(\frac{1}{k(k-1)}+\delta\right)$.
\end{itemize}
\end{theorem}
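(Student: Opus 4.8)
The plan is to reconstruct the argument of~\cite{GS10b} by combining two ingredients in the manner indicated above: the tight LP integrality gap for $k$-partite vertex cover of Aharoni et al.~\cite{AHK96} will play the role of an inner \emph{gadget}, and the generic Unique Games reduction of Kumar et al.~\cite{KMT11}, which promotes an integrality gap for the standard covering LP into a matching UGC-hardness gap, will supply the outer reduction. The two target quantities $\frac{1}{k(k-1)}$ and $\frac{1}{2(k-1)}$ are meant to be, respectively, the fractional and the integral cover density of the gadget, so that their ratio is exactly the factor $\frac{k}{2}$ of Lov\'asz~\cite{L75}.

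First I would pin down the gadget. From~\cite{AHK96}, suitably modified so that its fractional optimum is attained by a symmetric, part-wise uniform solution (as the composition requires), one obtains for every $k$ a balanced $k$-partite $k$-uniform hypergraph $G$ together with a feasible LP solution of total weight $\frac{1}{k(k-1)}\,|V(G)|$, while every integral vertex cover of $G$ has size at least $\left(\frac{1}{2(k-1)}-o(1)\right)|V(G)|$. The symmetry of the fractional solution across the $k$ parts is precisely what will let it be lifted through long codes in the next step.

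Next I would run the KMT reduction. Starting from a Unique Games instance $\mathcal{U}$ with label set $[R]$ whose value is hard to approximate, I would build $H$ on a vertex set indexed by triples $(w,a,x)$, where $w$ is a vertex of $\mathcal{U}$, $a$ is a vertex of $G$, and $x$ is a point of a biased long code $\{0,1\}^R$; declaring the $i$-th part of $H$ to consist of all triples whose $a$ lies in the $i$-th part of $G$ makes $H$ automatically $k$-partite $k$-uniform, and the product measure on $x$ keeps the parts balanced, so $|V(H)| \geq n_\delta$ can be arranged. Each edge of $H$ would superpose a copy of a gadget edge of $G$ with the dictatorship/long-code test governed by the UG permutations. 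For \emph{completeness}, a labeling satisfying a $(1-\eta)$-fraction of $\mathcal{U}$ yields globally consistent ``dictator'' coordinates, and composing these with the fractional solution of $G$ produces a vertex cover of density $\frac{1}{k(k-1)}+\delta$. For \emph{soundness}, if no labeling satisfies more than an $\eta$-fraction of the constraints, I would argue that any cover of density below $\frac{1}{2(k-1)}-\delta$ must, on many long codes, be correlated with a small set of influential coordinates, which can then be decoded into a labeling for $\mathcal{U}$ beating $\eta$ --- a contradiction.

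The step I expect to be the main obstacle is exactly this soundness analysis: one has to push the invariance-principle / influence-decoding argument of~\cite{KMT11} through the particular \emph{modified} AHK gadget, verifying that a cover beating the integral density $\frac{1}{2(k-1)}$ really forces decodable structure, and that the bookkeeping of the gadget's fractional versus integral optima yields exactly the claimed constants rather than weaker ones. A secondary but necessary check is that the modification of~\cite{AHK96} preserves both the balanced $k$-partite structure and the $\frac{k}{2}$ gap while adding the symmetry the composition demands; once these are in place, the completeness side and the polynomiality of the reduction are routine.
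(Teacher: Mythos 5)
The paper offers no proof of this statement at all---it imports Theorem~\ref{ugcmain2} verbatim from~\cite{GS10b}, remarking only that Guruswami and Saket obtained it by applying the strict-CSP framework of Kumar et al.~\cite{KMT11} to a suitably modified version of the Aharoni--Holzman--Krivelevich LP integrality gap~\cite{AHK96}. Your reconstruction---the~\cite{AHK96} instance as a gadget with fractional cover density $\frac{1}{k(k-1)}$ and integral cover density $\frac{1}{2(k-1)}$ (ratio $\frac{k}{2}$), composed through the~\cite{KMT11} long-code reduction with the $k$-partition inherited from the gadget coordinate---is precisely the route the paper attributes to~\cite{GS10b}, so your plan matches the (cited) proof in approach, with the honest caveat, which you yourself flag, that the influence-decoding soundness analysis and the partiteness-preserving modification of the gap instance constitute the actual technical content of~\cite{GS10b} and remain unexecuted in your sketch.
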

As the starting point of our reduction, we use Theorem~\ref{ugcmain2} and prove the following: 
\begin{theorem}\label{ugcmain3}
For every $\delta>0$, $\epsilon\in (0,1)$, $\ell\in [k-1]$, and $k\geq 3$, there exists no polynomial time approximation
algorithm with an approximation ratio 
$$\frac{k }{2+\frac{2(k-1)(k-2)\epsilon}{k +  (k- 2)\epsilon}}-\delta$$ 
for the Vertex Cover problem in  $\ell$-wise $\eps$-dense
$k$-partite $k$-uniform hypergraphs assuming the Unique Games Conjecture.
\end{theorem}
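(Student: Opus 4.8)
The plan is to give a gap-preserving reduction from the balanced $k$-partite instances of Theorem~\ref{ugcmain2} to $\ell$-wise $\eps$-dense balanced instances, padding the hard instance with a complete $k$-partite ``density gadget'' whose optimal cover is fully understood, so that the known $\frac{k}{2}$-type gap is distorted by exactly the factor needed.

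First I would take a hard instance $H_0=(V^0_1,\dots,V^0_k,E(H_0))$ from Theorem~\ref{ugcmain2}, balanced with $|V^0_i|=N$. I enlarge each part by a fresh set $W_i$ of size $M:=\lceil\tfrac{\eps}{1-\eps}N\rceil$, setting $V_i=V^0_i\cup W_i$, so that the new hypergraph $H$ stays balanced with parts of size $N+M$. Fix an index set $I\in{[k]\choose\ell}$ with $k\notin I$ (possible since $\ell\le k-1$) and put $J=[k]\setminus I$. The gadget $D$ consists of all edges whose coordinate $i\in I$ ranges over the full part $V_i$, whose coordinate $k$ lies in $W_k$, and whose remaining coordinates $j\in J\setminus\{k\}$ range over the full part $V_j$; I set $H=H_0\cup D$. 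The decisive design choice is that only the single coordinate $k$ is confined to the new vertices: every partial edge $S$ on $I$ then satisfies $d_H(S)\ge M\,(N+M)^{k-\ell-1}$, and the choice of $M$ gives exactly $M\ge\eps(N+M)$, hence $d_H(S)\ge\eps(N+M)^{k-\ell}=\eps\prod_{j\in J}|V_j|$. Thus $H$ is $\ell$-wise $\eps$-dense, with a threshold independent of $\ell$ — this is what makes the resulting ratio independent of $\ell$.

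The core of the argument is to show $\mathrm{OPT}(H)=\mathrm{OPT}(H_0)+M$. For the upper bound, $\mathrm{OPT}(H_0)\cup W_k$ covers $H$, since it covers $H_0$ and every gadget edge meets $W_k$. For the lower bound, I would use that $D$ is a complete $k$-partite hypergraph on the parts $\{V_i\}_{i\in I}$, $W_k$, $\{V_j\}_{j\in J\setminus\{k\}}$, so any cover $C$ of $H$ must contain one of these parts entirely; each such part contains at least $M$ of the new vertices $\bigcup_i W_i$, while $C\cap V^0$ must still cover $H_0$ and hence has size at least $\mathrm{OPT}(H_0)$. As these two contributions are disjoint, $|C|\ge\mathrm{OPT}(H_0)+M$. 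I expect this interaction between the gadget and the embedded hard instance to be the main obstacle, and the completeness of $D$ is exactly what keeps it clean.

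Finally I would substitute the two cases of Theorem~\ref{ugcmain2}, namely $\mathrm{OPT}(H_0)\le N(\tfrac1{k-1}+k\delta)$ (yes) and $\mathrm{OPT}(H_0)\ge N(\tfrac{k}{2(k-1)}-k\delta)$ (no), into $\mathrm{OPT}(H)=\mathrm{OPT}(H_0)+M$ with $M/N\to\tfrac{\eps}{1-\eps}$. A short computation shows that the induced gap tends, as $\delta\to0$ and $N\to\infty$, to
\[
\frac{\tfrac{k}{2(k-1)}+\tfrac{\eps}{1-\eps}}{\tfrac1{k-1}+\tfrac{\eps}{1-\eps}}
=\frac{k+(k-2)\eps}{2\bigl(1+(k-2)\eps\bigr)}
=\frac{k}{2+\frac{2(k-1)(k-2)\eps}{k+(k-2)\eps}} .
\]
An approximation algorithm with ratio below this gap would separate the yes- and no-instances of Theorem~\ref{ugcmain2} and thus refute the Unique Games Conjecture; taking $\delta$ small and $N$ large (so that the ceiling in $M$ is negligible) yields the stated bound $\frac{k}{2+\frac{2(k-1)(k-2)\eps}{k+(k-2)\eps}}-\delta$ for every $\ell\in[k-1]$ and $k\ge3$.
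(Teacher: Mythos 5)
Your proposal is correct and follows essentially the same route as the paper: pad the hard instance of Theorem~\ref{ugcmain2} with $\Theta\bigl(\frac{\eps}{1-\eps}\cdot\frac{n}{k}\bigr)$ new vertices per part, add all edges through the new vertices of one distinguished part, verify ($\ell$-wise) $\eps$-density, and compute the shifted gap $\frac{k+(k-2)\eps}{2(1+(k-2)\eps)}$. Your argument is in fact slightly more careful than the paper's at one point: you prove the lower bound $\mathrm{OPT}(H)\geq\mathrm{OPT}(H_0)+M$ via the observation that a cover of the complete $k$-partite gadget must contain a full part (hence at least $M$ new vertices, disjoint from the at least $\mathrm{OPT}(H_0)$ old vertices needed for $H_0$), whereas the paper simply asserts the corresponding transformation of the decision question.
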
 

\begin{proof}
First, we concentrate on the $\epsilon$-dense case and afterwards, 
we extend the range of $\ell$.
As a starting point of the reduction, we use the $k$-partite $k$-uniform hypergraph 
$H=(V_1,..,V_k,E(H))$ from Theorem~\ref{ugcmain2}
and construct an $\eps$-dense $k$-partite 
$k$-uniform hypergraph $H'=(V'_1,..,V'_k,E')$.\\
%(we observe that the hypergraph is balanced).
Let us start with the description of $H'$. First, we join the set $C_i$ of $\frac{\epsilon}{1-\epsilon}\frac{n}{k}$ vertices to $V_i$ 
for every $i\in [k]$ and add 
all possible edges $e$ of $H'$ to $E'$ with the restriction $C_1\cap e\neq \emptyset $.
Thus, we obtain $|V'_i|=\frac n k+\frac{\epsilon}{1-\epsilon}\frac{n}{k}$  for all 
$i\in [k]$.\\ 
Now, let us analyze how the size of the optimal solution of $H'$ transforms.
We denote by $OPT'$
an optimal vertex cover of $H'$. The UGC-hard decision question from Theorem~\ref{ugcmain2}
transforms into the following:
$$
n\left(\frac{1}{2(k-1)}-\delta\right )+\frac{\epsilon}{1-\epsilon}\frac{n}{k}\leq |OPT'|\,\,\textrm{  or  }\,\,
|OPT'|\leq n\left(\frac{1}{k(k-1)}+\delta\right)+\frac{\epsilon}{1-\epsilon}\frac{n}{k} \\
$$

Assuming the UGC, this implies the hardness of approximating the Vertex Cover problem
in  $\eps$-dense hypergraphs for every $\delta'>0$ to within:
\begin{eqnarray}
\frac{n\left(\frac{1}{2(k-1)}-\delta\right)+\frac{\epsilon}{1-\epsilon}\frac{n}{k}  }
{  n\left(\frac{1}{k(k-1)}+\delta\right)+\frac{\epsilon}{1-\epsilon}\frac{n}{k}}
& = & \frac{\frac{1-\epsilon}{2(k-1)}-\delta(1-\epsilon)+\frac{\epsilon}{k}  }
{  \frac{1-\epsilon}{k(k-1)}+\delta(1-\epsilon)+\frac{\epsilon}{k}} \\[1ex]
 & = & \frac{\frac{(1-\epsilon)k}{2(k-1)k}+
 \frac{2\epsilon(k-1)}{2k(k-1)}}{\frac{1-\epsilon}{(k-1)k}+\frac{\epsilon (k-1)}{k(k-1)} } -\delta'
\end{eqnarray}

\begin{eqnarray}
\frac{\frac{(1-\epsilon)k}{2(k-1)k}+
 \frac{2\epsilon(k-1)}{2k(k-1)}}{\frac{1-\epsilon}{(k-1)k}+\frac{\epsilon (k-1)}{k(k-1)} } -\delta'
  & = & \frac{\frac{k-\epsilon k +2\epsilon k- 2\epsilon}{2(k-1)k}
 }{\frac{1-\epsilon+\epsilon k- \epsilon }{(k-1)k}  } -\delta'\\[1ex]
 & = & \frac{k +  (k- 2)\epsilon}{2(1+(k-2)\epsilon)}
  -\delta' \\[1ex]
  & = & \frac{k }{\frac{2k(1+(k-2)\epsilon)}{k +  (k- 2)\epsilon}}
  -\delta' \\[1ex]
  & = & \frac{k }{\frac{2k+2(k-2)\epsilon+(2k-2)(k-2)\epsilon}{k +  (k- 2)\epsilon}}
  -\delta' \\[1ex]
  & = & \frac{k }{2+\frac{(2k-2)(k-2)\epsilon}{k +  (k- 2)\epsilon}}
  -\delta' \\[1ex]
  & = & \frac{k }{2+\frac{2(k-1)(k-2)\epsilon}{k +  (k- 2)\epsilon}}
  -\delta'
   \label{eq:UGCc1a}
\end{eqnarray}   
Finally, we have to verify that the constructed hypergraph $H'$ is indeed $\epsilon$-dense. 
Notice that $H'$ can have at most $(|V'_1|)^{k}=(\frac n k+\frac{\epsilon}{1-\epsilon}\frac{n}{k})^{k}$
edges. Therefore, we obtain  the following:
$$\frac{\left(\frac{\epsilon}{1-\epsilon}\frac{n}{k}\right)\left(\frac n k+\frac{\epsilon}{1-\epsilon}\frac{n}{k}\right)^{k-1}}
{\left(\frac n k+\frac{\epsilon}{1-\epsilon}\frac{n}{k}\right)^{k}}\quad=\quad
\frac{\frac n k \frac{\epsilon}{1-\epsilon}}{\frac n k\left(1+\frac{\epsilon}{1-\epsilon}\right)}\quad=\quad
\frac{\frac{\epsilon}{1-\epsilon}   }{ \frac{1+\epsilon-\epsilon}{1-\epsilon}  }=\epsilon$$ 
Notice that the constructed hypergraph is also $\ell$-wise $\epsilon$-dense balanced.
Hence, we obtain the same inapproximability factor in this case as well.
\end{proof}

Next, we combine the former construction with a conjecture about Unique Games hardness
of the Vertex Cover problem in balanced $k$-partite $k$-uniform hypergraphs. In particular,
we postulate the following:
  
\begin{conjecture}\label{assumt}
Given a balanced $k$-partite $k$-uniform hypergraph $H=(V_1,..,V_k,E(H))$ with  $k\geq 3$, 
let $OPT$ denote an optimal vertex cover of $H$.
For every $\delta>0$, the following is UGC-hard to decide:
$$|V|\left(\frac{1}{k}-\delta\right)\,\leq\,|OPT|\quad\textrm{  or  }\quad|OPT|\,\leq\,|V|
\left(\frac{2}{k^2}+\delta\right)$$
\end{conjecture}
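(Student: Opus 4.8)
The plan is to treat this as a postulated Unique Games hardness assumption rather than an unconditional theorem, and to describe the reduction one would have to carry out to establish it. First I would observe that the statement is a sharpening of Theorem~\ref{ugcmain2}: both assert the same ratio $\frac{1/k}{2/k^2}=\frac{k}{2}$, which matches Lov\'asz's $\frac{k}{2}$-approximation~\cite{L75}, but the conjecture pushes the two thresholds to their extreme values. Indeed, in a balanced $k$-partite $k$-uniform hypergraph the smallest part $V_k$ of size $\frac{|V|}{k}$ is always a vertex cover, so the completeness threshold $\frac{1}{k}-\delta$ sits just below this trivial bound; proving the conjecture therefore amounts to showing, under the UGC, that the trivial single-part cover is essentially optimal on the hard (\emph{Yes}) instances, while the \emph{No} instances admit a much smaller cover of fractional size $\frac{2}{k^2}$.

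For the reduction itself I would follow the long-code / dictatorship-test framework of Guruswami and Saket~\cite{GS10b} and Khot and Regev~\cite{KR08}, starting from a Unique Games instance. Each of the $k$ colour classes $V_i$ would be built from long-code blocks over the label set, and the hyperedges would encode a $k$-partite local test, so that the resulting hypergraph is exactly balanced. On the completeness side, a labelling satisfying almost all Unique Games constraints is turned into dictator functions whose complement is an independent set of fractional size $1-\frac{2}{k^2}-\delta$, giving a vertex cover of size at most $\left(\frac{2}{k^2}+\delta\right)|V|$.

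On the soundness side I would argue by influence decoding: assuming an independent set of fractional size exceeding $1-\frac{1}{k}+\delta$, an invariance-principle / Fourier-analytic argument produces, in each block, a small set of coordinates of non-negligible influence, which can be decoded into a randomized labelling satisfying a constant fraction of the Unique Games constraints, contradicting the UGC. This yields the claimed gap between the two thresholds $\frac{1}{k}-\delta$ and $\frac{2}{k^2}+\delta$.

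The hard part will be calibrating the test so that the completeness and soundness constants are simultaneously exactly $\frac{2}{k^2}$ and $\frac{1}{k}$ while keeping all $k$ parts of equal size. The construction behind Theorem~\ref{ugcmain2} already achieves the optimal ratio $\frac{k}{2}$ but only the weaker absolute thresholds $\frac{1}{k(k-1)}$ and $\frac{1}{2(k-1)}$; bridging the roughly factor-of-two gap in both thresholds, under the balancedness constraint, is precisely the delicate step that is currently out of reach, which is why the statement is formulated as a conjecture and why the recently claimed nearly optimal hardness factor~\cite{SS11} is invoked only as supporting evidence rather than as a direct input.
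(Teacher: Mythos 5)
The statement you were asked about is Conjecture~\ref{assumt}, which the paper explicitly \emph{postulates} rather than proves --- it serves purely as a hypothesis feeding into Theorem~\ref{thm:conj} --- so there is no proof in the paper to compare against, and your decision to treat it as a hardness assumption accompanied by a speculative reduction sketch is exactly the paper's own stance. Your surrounding observations are accurate (the $\frac{k}{2}$ ratio matching Lov\'asz's algorithm, the trivial cover by one part of size $\frac{|V|}{k}$ showing the threshold $\frac{1}{k}-\delta$ is essentially the largest possible, and the roughly factor-of-two strengthening of both thresholds relative to Theorem~\ref{ugcmain2}); the only blemish is a harmless labeling inconsistency, in that your first paragraph calls the large-cover instances the \emph{Yes} instances while your dictatorship-test paragraph uses the standard convention under which completeness (a near-satisfiable Unique Games instance) yields the small cover of size $\left(\frac{2}{k^2}+\delta\right)|V|$.
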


Combining Conjecture~\ref{assumt} with the construction in Theorem~\ref{ugcmain3}, it yields
the following inapproximability result which matches precisely the approximation upper bound achieved
by our approximation algorithm described in Section~\ref{approxopt}: 

\begin{theorem}\label{thm:conj}
For every $\delta>0$, $\epsilon\in (0,1)$, $\ell\in [k-1]$, and $k\geq 3$, there exists no polynomial time approximation
algorithm with an approximation ratio 
$$\frac{ k }{2+(k-2) \epsilon     } -\delta$$ 
for the Vertex Cover problem in  $\ell$-wise $\eps$-dense
$k$-partite $k$-uniform hypergraphs assuming Conjecture~\ref{assumt}.
\end{theorem}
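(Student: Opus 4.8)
The plan is to run the \emph{same} reduction as in the proof of Theorem~\ref{ugcmain3}, but seeded with the stronger gap supplied by Conjecture~\ref{assumt} in place of Theorem~\ref{ugcmain2}. Concretely, starting from a balanced $k$-partite $k$-uniform hypergraph $H=(V_1,\dots,V_k,E(H))$ on $n$ vertices, I would adjoin to each $V_i$ a fresh set $C_i$ of $\frac{\epsilon}{1-\epsilon}\frac nk$ vertices and insert every edge $e$ meeting $C_1$, producing $H'=(V'_1,\dots,V'_k,E')$ with $|V'_i|=\frac{n}{k(1-\epsilon)}$. Since the construction of $H'$ is literally the one used in Theorem~\ref{ugcmain3}, the density calculation there applies verbatim: $H'$ is $\epsilon$-dense, and in fact $\ell$-wise $\epsilon$-dense balanced for every $\ell\in[k-1]$ (choosing the witnessing index set $I$ to avoid coordinate $1$, which is possible because $\ell\leq k-1$), so no new verification is needed on that front.

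The next step is to track how the completeness/soundness gap transforms. Every newly added edge meets $C_1$, so taking $C_1$ into the cover is optimal for the new edges while $H$'s own edges still require a vertex cover of $H$; hence $|OPT'|=|OPT|+\frac{\epsilon}{1-\epsilon}\frac nk$. Feeding in the two cases of Conjecture~\ref{assumt} --- namely $|OPT|\geq n(\frac1k-\delta)$ versus $|OPT|\leq n(\frac{2}{k^2}+\delta)$ --- the UGC-hard decision question for $H'$ becomes whether
$$n\Bigl(\tfrac1k-\delta\Bigr)+\tfrac{\epsilon}{1-\epsilon}\tfrac nk\,\leq\,|OPT'|\quad\text{or}\quad|OPT'|\,\leq\,n\Bigl(\tfrac{2}{k^2}+\delta\Bigr)+\tfrac{\epsilon}{1-\epsilon}\tfrac nk.$$

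It then remains to compute the resulting inapproximability ratio and check that it collapses to the advertised form. Dividing the two thresholds and absorbing the $\delta$-terms into a single $\delta'$ exactly as in Theorem~\ref{ugcmain3}, clearing the common factor $k(1-\epsilon)$ turns the numerator into $(1-\epsilon)+\epsilon=1$ and the denominator into $\frac{2(1-\epsilon)}{k}+\epsilon$, so the ratio equals $\frac{k}{2(1-\epsilon)+k\epsilon}=\frac{k}{2+(k-2)\epsilon}$, precisely the target. I do not anticipate a genuine obstacle here: the whole argument is a transplant of the Theorem~\ref{ugcmain3} reduction, and the only real content is (i) confirming that the substituted gap still yields a valid $\epsilon$-dense (and $\ell$-wise $\epsilon$-dense balanced) instance, which is immediate since the construction is unchanged, and (ii) verifying the one-line algebraic simplification above. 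The mild point worth double-checking is why the \emph{same} additive shift $\frac{\epsilon}{1-\epsilon}\frac nk$ produces a \emph{larger} (hence stronger) ratio than in Theorem~\ref{ugcmain3}, even though both seed gaps have multiplicative ratio $\frac k2$: it is precisely because Conjecture~\ref{assumt} places the thresholds at a larger absolute scale ($\frac1k$ and $\frac{2}{k^2}$ rather than $\frac{1}{2(k-1)}$ and $\frac{1}{k(k-1)}$), so the fixed additive term dilutes the gap less, and the resulting bound $\frac{k}{2+(k-2)\epsilon}$ matches exactly the upper bound of Theorem~\ref{thm:main1}.
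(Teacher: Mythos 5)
Your proposal is correct and follows essentially the same route as the paper's own proof: the identical reduction from Theorem~\ref{ugcmain3} (adjoining the sets $C_i$ of size $\frac{\epsilon}{1-\epsilon}\frac{n}{k}$ and all edges meeting $C_1$), the same transformed decision thresholds seeded now by Conjecture~\ref{assumt}, and the same algebraic collapse of the ratio to $\frac{k}{2+(k-2)\epsilon}-\delta'$. Your added remarks (the exact identity $|OPT'|=|OPT|+\frac{\epsilon}{1-\epsilon}\frac{n}{k}$ and choosing the witnessing index set $I$ to avoid coordinate $1$ for $\ell$-wise density) merely make explicit details the paper leaves implicit.
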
 

\begin{proof}
 The UGC-hard decision question from Conjecture~\ref{assumt}
transforms into the following:
$$
n\left(\frac{1}{k}-\delta\right)+\frac{\epsilon}{1-\epsilon}\frac{n}{k}\,\leq\,|OPT'|\quad\textrm{  or  }\quad
|OPT'|\,\leq\,n\left(\frac{2}{k^2}+\delta\right)+\frac{\epsilon}{1-\epsilon}\frac{n}{k} 
$$
Assuming the UGC, this implies the hardness of approximating the Vertex Cover problem
in  $\eps$-dense
$k$-partite $k$-uniform hypergraphs for every $\delta'>0$ to within:
\begin{eqnarray}
\frac{n\left(\frac{1}{k}-\delta\right)+\frac{\epsilon}{1-\epsilon}\frac{n}{k}  }
{  n\left(\frac{2}{k^2}+\delta\right)+\frac{\epsilon}{1-\epsilon}\frac{n}{k}}
& = & \frac{n\left(\frac{1}{k}-\delta\right)(1-\epsilon)+\frac{\epsilon n}{k}}{n\left(\frac{2}{k^2}+\delta\right)(1-\epsilon)+\frac{\epsilon n}{k} } \\[1ex]
 & = & \frac{\frac{n}{k}}{n\left(\frac2{k^2}\right)(1-\epsilon)+\frac{k\epsilon n}{k^2} } -\delta'\\[1ex]
 &  = & \frac{ k }{2(1-\epsilon)+ k \epsilon} -\delta'\\[1ex]
  &  = & \frac{ k }{2+ (k-2) \epsilon    } -\delta'  \label{eq:UGCc1}
\end{eqnarray}   
\end{proof}

\subsection{NP-Hardness}

Recently, Sachdeva and Saket proved in~\cite{SS11} a nearly optimal NP-hardness
of the Vertex Cover problem on balanced $k$-uniform $k$-partite hypergraphs.
More precisely, they obtained the following inapproximability result:
  \begin{theorem}\cite{SS11}\label{npmain2}
Given a balanced $k$-partite $k$-uniform hypergraph $H=(V,E)$ with $k\geq 4$, let $OPT$ denote an optimal vertex cover of $H$.
For every $\delta>0$, the following is NP-hard to decide:
\begin{eqnarray*}
|V|\left(\frac{k}{2(k+1)(2(k+1)+1)}-\delta\right)&\leq&|OPT|\\[1ex]
\textrm{  or  }\\[1ex]
|V|\left(\frac{1}{k(2(k+1)+1)}+\delta\right)&\geq&|OPT|
\end{eqnarray*}   
\end{theorem}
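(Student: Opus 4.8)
The plan is to establish Theorem~\ref{npmain2} through a long-code (dictatorship) reduction from a multi-layered, \emph{smooth} version of Label Cover, in the lineage of the Dinur--Safra and Dinur--Guruswami--Khot--Regev hardness reductions for hypergraph Vertex Cover, but arranged so that the produced hypergraph is genuinely $k$-partite and $k$-uniform. First I would fix the outer instance to be a Label Cover problem organized into $k$ layers $L_1,\dots,L_k$ with projection constraints between the layers, obtained from the PCP theorem together with parallel repetition, and I would insist on the smoothness property: for a fixed vertex and two random constraints incident to it, the two projections rarely identify two distinct labels. Smoothness is precisely what allows the $k$ long codes attached to the $k$ layers to be tested for consistency with only one query per layer, which is what keeps the construction $k$-uniform and $k$-partite, and I would also demand outer soundness $2^{-\Omega(t)}$ where $t$ is the repetition parameter.

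The construction then dedicates one part $V_i$ of the hypergraph to each layer $L_i$. A vertex of $V_i$ is a pair consisting of a layer-$i$ outer vertex together with a coordinate of a biased long code over the label alphabet, so that $V_i$ decomposes into ``clouds,'' one long code per outer vertex. A hyperedge selects exactly one vertex from each $V_i$, which makes the partite condition automatic, subject to the outer projection constraints linking the chosen clouds and to the requirement that the $k$ chosen coordinates form a pattern that a dictatorship test declares ``must be covered.'' Concretely I would draw the coordinates from a $p$-biased product distribution with $p$ tuned to the quantity $q=2(k+1)+1$ occurring in the statement, and I would read the indicator of the complement of the vertex cover, restricted to a cloud, as the long code whose acceptance probability is being controlled by the test.

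For completeness I would take an optimal labeling of a YES outer instance and, cloud by cloud, place into the vertex cover exactly the coordinates singled out by the label, so the cover is a union of dictators. The resulting fraction of chosen vertices is the $p$-biased measure of such a dictator configuration across the $k$ parts, which by the choice of $p$ evaluates to $\frac{1}{k(2(k+1)+1)}$, yielding the claimed upper bound on $|OPT|$. For soundness I would argue the contrapositive: a vertex cover of relative size below $\frac{k}{2(k+1)(2(k+1)+1)}-\delta$ leaves an independent set of large biased measure in a noticeable fraction of clouds, and a standard influence-based decoding then reads off, from the influential coordinates of these long codes, a randomized labeling satisfying a non-negligible fraction of the outer constraints. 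Here smoothness guarantees that influential coordinates survive the inter-layer projections, so that the decoded labels are mutually consistent across the $k$ parts, contradicting the soundness of the outer Label Cover instance.

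The main obstacle is quantitative rather than structural: forcing the two constants to be exactly $\frac{1}{k(2(k+1)+1)}$ and $\frac{k}{2(k+1)(2(k+1)+1)}$. This requires a precise optimization of the bias $p$ and of the covering pattern in the hyperedge test, showing that the dictator measure is exactly $\frac{1}{kq}$ while the minimum biased measure of an independent set consistent with the test distribution is at least $\frac{k}{2(k+1)q}$ up to the slack $\delta$. Ensuring that the influence-decoding loses only $\delta$ in the soundness gap, and then choosing the number of repetitions (hence the outer soundness) small enough to absorb that loss, is the delicate step; by contrast, the partite and uniformity constraints essentially come for free once the one-part-per-layer design and the single-query-per-cloud test are fixed.
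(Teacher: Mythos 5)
You should first note a mismatch of scope: the paper does not prove this statement at all --- Theorem~\ref{npmain2} is imported verbatim from Sachdeva and Saket~\cite{SS11} and used as a black box for the subsequent reduction. So what you have written is an attempted reconstruction of an entire separate research paper, and judged on that footing it is an outline of the right \emph{family} of techniques (smooth multilayered Label Cover in the Dinur--Guruswami--Khot--Regev lineage, one part per layer, $p$-biased long codes, influence-based decoding) rather than a proof. The decisive content is missing, and you say so yourself: you defer ``forcing the two constants to be exactly $\frac{1}{k(2(k+1)+1)}$ and $\frac{k}{2(k+1)(2(k+1)+1)}$'' to an unperformed ``precise optimization of the bias $p$.'' But those constants are not the output of tuning a bias parameter in your sketch; they encode the structure of the outer instance itself --- the quantity $2(k+1)+1$ arises from the number of layers of the multilayered PCP and the layer-weighting, and the factor $\frac{k}{2(k+1)}$ in the soundness comes from the \emph{weak density} property of the multilayered construction, which rules out covers concentrated on a small subset of layers, combined with a biased-measure independent-set theorem (Dinur--Safra/Friedgut-style) applied to pairs of layers. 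None of these lemmas is stated, and without them the sketch cannot produce the claimed gap: your soundness paragraph asserts that a small cover ``leaves an independent set of large biased measure in a noticeable fraction of clouds,'' which is exactly the step where the layered structure and weak density must be invoked, not a ``standard'' consequence.

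There are also two structural gaps that would make the construction, as described, fail to yield the theorem as stated. First, a long-code reduction naturally produces a \emph{weighted} hypergraph (vertices carry $p$-biased measure); the theorem counts vertices of an unweighted \emph{balanced} instance, so one must discretize the weights by vertex duplication and arrange all $k$ parts to have exactly equal cardinality --- your clouds in different layers have different sizes, and this conversion (routine but necessary) is absent. Second, your completeness claim that the dictator-based cover has relative size exactly $\frac{1}{k(2(k+1)+1)}$ is asserted, not computed; in the actual accounting the $\frac1k$ factor comes from the cover living essentially in one part out of $k$, and the $\frac{1}{2(k+1)+1}$ from the biased measure determined by the layer structure, so the computation again presupposes the outer-instance parameters you never fix. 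In short: the approach is the correct one in spirit, but what you have is a research plan with the two load-bearing lemmas (weak density of the layered outer instance, and the quantitative biased independent-set bound) left unproven, plus an unhandled weighted-to-unweighted balancing step; the paper itself sidesteps all of this by citing~\cite{SS11}.
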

Combining our reduction from Theorem~\ref{ugcmain2} with Theorem~\ref{npmain2}, we prove the 
following inapproximability result under the assumption $P\neq NP$: 

\begin{theorem}
For every $\delta>0$, $\epsilon \in (0,1) $, $\ell\in [k-1]$, and $k\geq 4$, there is no polynomial time approximation
algorithm with an approximation ratio 
$$\frac{   k^2(1-\epsilon)+\epsilon2(k+1)(2(k+1)+1)  }
{  2(k+1)[1-\epsilon+ \epsilon(2(k+1)+1)]  }-\delta $$ 
for the Vertex Cover problem in $\ell$-wise $\eps$-dense
$k$-partite $k$-uniform hypergraphs assuming $P\neq NP$.
\end{theorem}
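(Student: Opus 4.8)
The plan is to reuse verbatim the density-boosting reduction from the proof of Theorem~\ref{ugcmain3}, feeding it the unconditional NP-hardness gap of Theorem~\ref{npmain2} in place of the Unique Games instance of Theorem~\ref{ugcmain2}. Concretely, I would start from the balanced $k$-partite $k$-uniform hypergraph $H=(V,E)$ of Theorem~\ref{npmain2} with $n=|V|$, adjoin to each part $V_i$ a fresh set $C_i$ of $\frac{\epsilon}{1-\epsilon}\frac{n}{k}$ vertices, and let $E'=E\cup\{e\mid C_1\cap e\neq\emptyset\}$, obtaining $H'=(V'_1,\dots,V'_k,E')$ with $|V'_i|=\frac{n}{k(1-\epsilon)}$ for every $i$. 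The hypergraph $H'$ is then balanced, and the density check is word-for-word the one closing the proof of Theorem~\ref{ugcmain3}: the new edges alone number $|C_1|\prod_{i=2}^{k}|V'_i|=\epsilon\prod_{i=1}^{k}|V'_i|$, so $H'$ is $\epsilon$-dense. For $\ell$-wise density I would pick the witnessing index set $I\in{[k]\choose \ell}$ entirely inside $\{2,\dots,k\}$ (possible since $\ell\leq k-1$); every transversal $S$ of the parts in $I$ then lies in $|C_1|=\epsilon|V'_1|$ of the new edges, which is exactly $\epsilon\prod_{i\in[k]\setminus I}|V'_i|$, so $H'$ is $\ell$-wise $\epsilon$-dense.

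The step I expect to carry the real content is showing that the optimum shifts by exactly $|C_1|=\frac{\epsilon}{1-\epsilon}\frac{n}{k}$ on both sides of the gap. For completeness, any cover of $H$ of size at most $n\bigl(\frac{1}{k(2(k+1)+1)}+\delta\bigr)$ together with $C_1$ covers $H'$, since every new edge meets $C_1$ in its part-$1$ coordinate; this gives $|OPT'|\leq n\bigl(\frac{1}{k(2(k+1)+1)}+\delta\bigr)+\frac{\epsilon}{1-\epsilon}\frac{n}{k}$. The soundness direction is the delicate one: given any cover $C'$ of $H'$, its restriction $C'\cap V$ must cover all original edges and hence has size at least $|OPT_H|$, while the edges lying entirely inside $C_1\times\dots\times C_k$ form a complete balanced $k$-partite hypergraph on the $C_i$'s whose vertex cover forces $|C'\cap\bigcup_i C_i|\geq|C_1|$. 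Since these two vertex sets are disjoint, $|C'|\geq|OPT_H|+|C_1|$, so in the soundness case $|OPT'|\geq n\bigl(\frac{k}{2(k+1)(2(k+1)+1)}-\delta\bigr)+\frac{\epsilon}{1-\epsilon}\frac{n}{k}$. Isolating the all-$C_i$ edges is what rules out the worry that a vertex cover might reuse original vertices to cheaply hit the new edges, and is the crux of the argument.

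Finally I would evaluate the transformed ratio
\[
\frac{n\!\left(\frac{k}{2(k+1)(2(k+1)+1)}-\delta\right)+\frac{\epsilon}{1-\epsilon}\frac{n}{k}}{n\!\left(\frac{1}{k(2(k+1)+1)}+\delta\right)+\frac{\epsilon}{1-\epsilon}\frac{n}{k}},
\]
following the same manipulation as in~\eqref{eq:UGCc1}: multiply numerator and denominator by $(1-\epsilon)$, absorb the $\delta$-terms into a single $\delta'$, clear the common $\frac{1}{k}$, and then multiply through by $2(k+1)(2(k+1)+1)$. The numerator collapses to $k^2(1-\epsilon)+\epsilon\,2(k+1)(2(k+1)+1)$ and the denominator to $2(k+1)\bigl[1-\epsilon+\epsilon(2(k+1)+1)\bigr]$, which is exactly the claimed factor. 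As a sanity check, sending $\epsilon\to0$ recovers the base factor $\frac{k^2}{2(k+1)}$ of Theorem~\ref{npmain2} while $\epsilon\to1$ gives $1$, consistent with fully dense instances being trivial. The hypothesis $k\geq 4$ is inherited directly from Theorem~\ref{npmain2}, and because the base hardness is unconditional the whole statement holds under $P\neq NP$ rather than under the Unique Games Conjecture.
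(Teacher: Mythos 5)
Your proposal is correct and follows essentially the same route as the paper: it reuses the density-boosting construction from Theorem~\ref{ugcmain3} (adjoining the sets $C_i$ and all edges meeting $C_1$) applied to the NP-hard gap of Theorem~\ref{npmain2}, and the ratio computation matches the paper's equations verbatim. In fact, your explicit verification that $|OPT'|=|OPT_H|+|C_1|$ (via the complete sub-hypergraph on the $C_i$'s) and your witness index set $I\subseteq\{2,\dots,k\}$ for $\ell$-wise density supply details the paper leaves implicit.
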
 

\begin{proof}
 The NP-hard decision question from Theorem~\ref{npmain2}
transforms into the following:
\begin{eqnarray*}
n\left(\frac{k}{2(k+1)(2(k+1)+1)}-\delta\right)+\frac{\epsilon}{1-\epsilon}\frac{n}{k} &\leq& |OPT'| \\
 \textrm{  or  } \\
 n\left(\frac{1}{k(2(k+1)+1)}+\delta\right)+\frac{\epsilon}{1-\epsilon}\frac{n}{k}& \geq &  |OPT'|
\end{eqnarray*}   
Assuming $NP\neq P$, this implies the hardness of approximating the Vertex Cover problem
in  $\eps$-dense hypergraphs for every $\delta'>0$ to within:
\begin{eqnarray}
\frac{   \frac{k(1-\epsilon)}{2(k+1)(2(k+1)+1)}+\frac{\epsilon}{k}  }
{  \frac{1-\epsilon}{k(2(k+1)+1)}+\frac{\epsilon}{k}} -\delta'
& = &  \frac{   \frac{k^2(1-\epsilon)+\epsilon2(k+1)(2(k+1)+1)}{k2(k+1)(2(k+1)+1)}  }
{  \frac{1-\epsilon+ \epsilon(2(k+1)+1) }{k(2(k+1)+1)} } -\delta'\\
 & = & \frac{   k^2(1-\epsilon)+\epsilon2(k+1)(2(k+1)+1)  }
{  2(k+1)[1-\epsilon+ \epsilon(2(k+1)+1)]  } -\delta'
\end{eqnarray}   
\end{proof}

\section{Further Research}
An interesting question remains about even tighter lower approximation bounds for
our problem, perhaps connecting it more closely to the integrality gap issue of
the LP of Lov\'asz~\cite{L75}.
  
\section*{Acknowledgment}
We thank Jean Cardinal for many stimulating discussions.

\end{document}